\documentclass{article}[11pt]
\pdfoutput=1
\usepackage[pdftex]{graphicx}
\usepackage{wrapfig,color,array}
\usepackage{algorithm2e}
\usepackage[noend]{algorithmic}
\usepackage[papersize={8.5in,11in},top=1.25in,left=1in,right=1in,bottom=1.25in]{geometry}
\usepackage{amsmath}
\usepackage{amsfonts}
\usepackage{multirow}
\usepackage{booktabs}
\usepackage{enumerate}
\usepackage{cite}

\def\math#1{$#1$}

\def\mand#1{$$#1$$}

\def\frac#1#2{{#1\over #2}}

\def\mld#1{\begin{equation}
#1
\end{equation}}



\def\choose#1#2{\left({{#1}\atop{#2}}\right)}

\def\a{{\mathbf a}}

\def\r#1{{(\ref{#1})}}

\def\dotfil{\leaders\hbox to 1.5mm{.}\hfill}
\newcounter{rmnum}
\def\RN#1{\setcounter{rmnum}{#1}\uppercase\expandafter{\romannumeral\value{rmnum}}}
\def\rn#1{\setcounter{rmnum}{#1}\expandafter{\romannumeral\value{rmnum}}}

\newtheorem{theorem}{Theorem}
\newtheorem{lemma}{Lemma}
\newtheorem{proposition}{Proposition}
\def\a{\alpha}
%

%


\def\g{\Gamma}
\def\p{\psi}
\def\G{\mathbb{G}}
\def\S{\mathbb{S}}
\newcommand{\remove}[1]{}

\date{}


\begin{document}
\title{Seeding Influential Nodes in Non-Submodular Models of Information Diffusion}
\author{%
Elliot Anshelevich\\
CS Department, RPI.\\
{\sf eanshel@cs.rpi.edu}
\and
Ameya Hate\\
CS Department, RPI.\\
{\sf hatea2@rpi.edu}
\and
Malik Magdon-Ismail\\
CS Department, RPI.\\
{\sf magdon@cs.rpi.edu}
}

\maketitle

\begin{abstract}%
We consider the model of information diffusion in social networks from
\cite{Hui2010a} which incorporates
trust (weighted links) between actors, and allows actors to actively
participate in the spreading process, specifically through the ability
to query friends for additional information. This model
captures how social agents transmit and act upon information more
realistically as compared to the simpler threshold and cascade models.
However, it is more difficult to analyze, in particular with respect
to seeding strategies.
We present efficient, scalable algorithms for determining good seed sets
-- initial nodes to inject with the information.
Our general approach is to reduce our model to a class of simpler
models for which provably good sets can be constructed. By tuning
this class of simpler models, we obtain a good seed set for the original
more complex model. We call this the \emph{projected greedy approach}
because you `project' your model onto a class of simpler models where
a greedy seed set selection is near-optimal. We demonstrate the
effectiveness of our seeding strategy on synthetic graphs as well as a
realistic San Diego evacuation network constructed during the 2007 fires.
\end{abstract}

\section{Introduction}

Networks (social, computer, and physical) are replete with the flow of
information, ideas, innovations, etc., and it is these
flows which affect the way people think, act, and bind together in a
society. Ideally, important messages should disseminate quickly and reach
the people who need to take action, and the diffusion of malicious
gossip should, if possible, be
terminated. Since diffusion of both useful and malicious items is at the
core of our society, it is vital to understand the mechanisms of diffusion
through dynamic networks. A network can change as a result of the diffusion.
For example, an
evacuation message may not reach its intended audience because certain
important
people (critical conduits of information)
left the community before the diffusion completed.
In this paper, we study how to optimize a diffusion in
realistic, large scale (multi-million node) complex networks; in
particular, how to select those actors to be initially seeded
with information so as to maximize the ultimate number of actors receiving the
information \emph{and} acting upon that information. Of particular interest
to us is the diffusion of high-value actionable information -- information
which is asking the user to take some action --  in particular
diffusion of an evacuation warning. This will be the context of our discussion,
however our methods are general.

\begin{wrapfigure}[14]{r}{3in}
\vspace*{-0.25in}
\begin{center}
\rotatebox{-90}{
\resizebox{!}{3in}{\rotatebox{90}{\includegraphics*{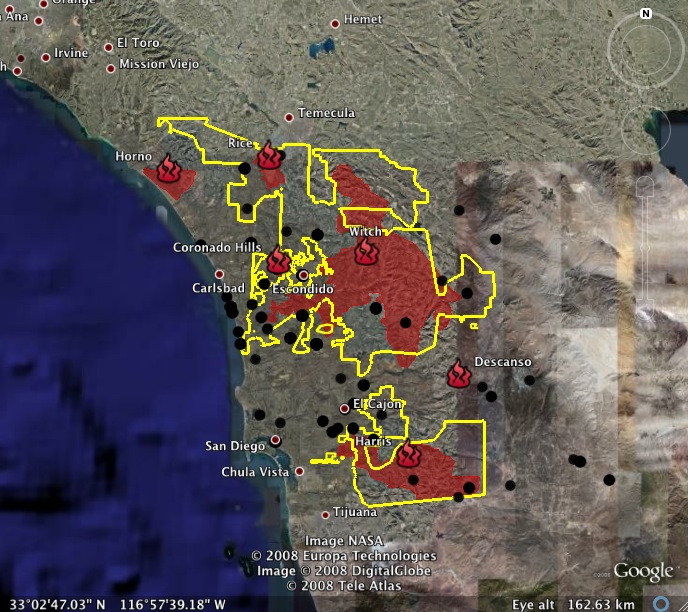}}}
}
\caption{\small 2007 San Diego fires evacuation area.\label{fig:SDfire}}
\end{center}
\end{wrapfigure}
The figure on the right shows the area which was affected by fires in 2007 (shaded
red),
the
mandatory evacuation area (yellow boundary),
and regions of unwarned people (black
dots).
There are many unwarned people in the mandated evacuation
area with at least 6 reported deaths.
This leaves the question, ``How can one improve the communication
of such high value information?''
The social network is important, and one natural avenue (given that
it is not feasible to communicate to everyone)
is to try to optimize with respect to a fixed budget of people who
you can contact. That is, can we choose the seeds of the information
diffusion to maximize the ultimate spread. This is the question we address.

The study of emergency warnings and evacuation is a good context for
diffusion.
There is no universally
trusted news source, and even if there were, it is practically infeasible
for such a news source to reach everyone.
Thus, it is essential to make use of the social communication network to
diffuse a warning through the network in such a way that people act.
A person hearing the same information
from multiple independent sources is more likely to act on it.
The notion of trust, which
 measures the likelihood that a message from one person to another
will be believed, plays an important role in such diffusions
\cite{Kelton2008,Chopra2003,Chopra2000}.
We will use the model developed in
\cite{Hui2010a} to approximate the process of social information
diffusion. This model captures:
\begin{enumerate}[(i)]
\item
the notion of trust and community structure;
\item
an actor's ability to query friends for more information
(people often receive warning messages from various sources
such as their family and friends, the media or local authorities,
through various channels such as face-to-face or telephone, and may
seek confirmation and/or additional information,
\cite{Sorensen1987,Sorensen1987a});
\item
the existence of multiple sources of the same information, each source 
is trusted
to a different level;
\item
network dynamics as a result of the diffusion (for example nodes evacuating
the network);
\item
tunable thresholds and parameters that can model different
types of diffusion, from gossip in online media to evacuation warnings.
\end{enumerate}
In short, we believe this model to be a reasonable model of general
information diffusion on a social network. We
summarize the essential details of the model in Section~\ref{section:model}.

\paragraph{The Need for Heuristics.}
Given a social network and an information diffusion model, even
one that is much simpler than the model hinted at above, there is no
known
procedure to efficiently\footnote{Choosing a seed set to 
maximize a diffusion
 belongs to the class
of  NP-hard problems, a class 
for which there are no known efficient procedures.
A procedure is efficient if it runs 
in time that is polynomial in the size of the social network. For practical
purposes an algorithm that takes longer than the cube of the network
size is already
 not feasible on population scale networks. Since the only known algorithms
that maximize a diffusion are exponential, such algorithms are far from
feasible.}
 compute a seed set that results in a maximum
number of actors taking the prescribed action. So, our general goal is to 
develop efficient heuristics to compute good (not necessarily optimal)
seed sets on population scale social networks. This means that the heuristics
should have sub-quadratic running time (ideally nearly linear
running time).

\subsection{Our Contribution}
We give efficient heuristics
to select a subset of the actors (the seed set) to initialize with
the information with the goal of trying to maximize
the final set of actors who
believe and \emph{act upon} that
information. We call this a targeting algorithm.
We introduce a new paradigm for
developing such heuristics to construct seed sets in complex and realistic diffusion models.
Our approach works as follows.
First we develop an appropriate simplification of the
general model (with certain tunable knobs) for which
we can develop \emph{efficient, near-optimal targeting algorithms}.
In contrast to the prior work, we develop the
targeting algorithms for a \emph{different, more efficient model},
not the original general model.
The key aspect of our approach
is that the simpler model has tunable parameters;
these tunable parameters can be adjusted so as to optimize the
ultimate diffusion \emph{for the true general model}.
More specifically, there are many instances of the simpler model, each
specified by a particular setting of the tunable parameters. For any
such instance of the simpler model we can compute a near-optimal seed set.
Since we do not know which instance of the simpler model best represents
the true model, we perform a guided search through all instances of the simpler model:
each instance of the simpler model gives a
seed set and we pick the best one.
The details are given in Section \ref{sec:greedy}.
Intuitively, the optimal seed sets for the simpler model give a ``smart'' 
set of seed sets for the general model, and one of these smart seeds is likely
to be good.

\paragraph{Experimental Testing}
We demonstrate the effectiveness of our approach on several random graph models,
as well as a model of the San Diego network during the 2007
fires that was created in \cite{Hui2010b}. In all cases, and for various
types of diffusions (settings of parameters within the true
general diffusion model)
our approach gives the best seeds when compared with two benchmark algorithms:
random targeting and high-degree targeting.

There is a tradeoff, however. Random targeting, though not very effective,
is very easy and requires little information regarding the network
other than the nodes that are present. High degree targeting requires more
information, namely a measure of prominence of the nodes as well, and
often, though not always,  does better than random seeding.
Our algorithms require the most information, namely some model for the
connectivity in the network (who links to whom). Though this may not be
known in practice, it can be estimated as was done in \cite{Hui2010b} for the
San Diego network (based on population densities and a geometric community
based random graph
model that gives highest probability to links that are between
geometrically close nodes within
the same community and lowest probability to links that are between
geometrically distant nodes within different communities).

\subsection{Background and Related Work}

In this section we give a brief summary of related work that forms a pertinent background for our research. Social networks play an important role in the
spread of ideas, information, products, diseases, etc. This diffusion affects the way people think, act and make decisions in a society. Thus modeling the flow of information is an important and active research area.

Infectious diseases are similar to infectious ideas and have dynamics similar to that of evacuation warnings.
Many studies use the homogeneous SIR model of Reed and Frost \cite{Newman2002}. Both Markov chains (e.g., \cite{Chellappa1993}) and differential equations \cite{Bailey1975,Nowak2000} have been used.
Most of these methods make homogeneity assumptions about the underlying network, and thus do not take full advantage of the network topology.

Information diffusion is a long researched area \cite{Valente1996}, with work on online communities becoming a very active topic recently, on account of innovation diffusion, viral marketing, and computer virus spread \cite{Gruhl2004,Hill2006,Iribarren2009,Java2006,Leskovec2007a,Leskovec2006,Strang1998,Wan2007}.
Two fundamental types of models of information diffusion used in the literature are
cascade models and threshold models \cite{Granovetter1978,Young2000,Kempe2003,Kempe2006}.
For example, in the Independent Cascade model each node gets an opportunity to influence each of its inactive neighbors and is successful with a given probability. In the Linear Threshold model, each node has a threshold which determines the number of neighbors that need to be active for this node to become active itself. Such simple diffusion models often have mathematically
convenient properties, like submodularity, which do not hold in
general diffusion models. Submodularity, for instance, implies that efficient
greedy algorithms can construct near-optimal seed sets \cite{Kempe2003}.
Our work, in part, is to use simpler models which are submodular in order to guide the search for good seed sets in the general diffusion model.

\paragraph{Warnings and Evacuation} In order to study the diffusion of warnings and evacuation messages through a population, we need to understand how various psychological, social, and economic factors drive the process.
Humans have a layered warning response process that goes through multiple stages \cite{Sorensen2003}.
Hence, any attempt at realistically modeling this scenario must consider these factors.
For this purpose we turn to work carried out in Social sciences literature.

Much research has focused on employing advanced technology in detection and prediction, or optimizing evacuation routes.
Far less attention has been paid to early warning dissemination \cite{Golden2000,Lindell2007,Lindell2007a,Sorensen2000}.
We conclude from warning response behavior \cite{Mileti1999} and the decision making involved \cite{DisasterResearchintheSocialSciences:FutureChallenges2006,Lindell2004,Perry2003} that it is as much social as it is about communication.
The warning sequence process model \cite{Sorensen2000,Sorensen1987,Sorensen1987a} posits that there are six phases extending from disseminating the warning to protective action.
Social science theory suggests that people's response to warnings depends on the social, economic, demographic and physical/technological environment, \cite{Sorensen2003,Golden2000,Lindell2007a,Aguirre1994,Ding2006} as well as message	 content \cite{Lindell2007a,Sorensen2000,Lindell2004,Perry2003,Sorensen1987,Sorensen1987a}.
People may seek additional information and confirmation through observation and direct contact \cite{Sorensen2000,Mileti1999,Sorensen1987,Sorensen1987a}.
Thus, special care should be taken to incorporate these effects in any general model for diffusion of warnings.
The warning time distribution has two components: the official broadcast component \cite{Perry2003,Rogers1991,Sorensen1992} and the information contagion (diffusion) component \cite{Sorensen1987,Sorensen1987a,Ding2006,Rogers1991}.
There has been considerable research in estimating warning time distributions by modeling the warning network, however the contagion component is little understood \cite{Perry2003,Aguirre1994,Rogers1991,Lindell2004a}, and needs to be better understood for protective agencies to fully estimate warning times \cite{Lindell2007,Lindell2007a}.
Any tool that wants to analyze warning time distribution must incorporate the effects of inhomogeneities such as social groups, ethnic groups, and in general differential trust in and access to warning systems \cite{Sorensen1987,Sorensen1987a,Lindell2004a,Perry1982,Perry2008}. The model of diffusion
we use captures trust as well
as human reactivity and was built around these social considerations.
Our contribution is \emph{not} the model, which we take as given
from prior work
\cite{Hui2010a}. Our contribution is to take this social-science based model
of diffusion and construct good targeting algorithms for it.

\paragraph{Agent based models}
In scenarios where individuals are influenced by their social environment, agent based modeling is often used \cite{Delre2007,Gruhl2004}.
In such cases, decision making entities are represented by agents whose behavior can be based on a set of rules.
Such a model allows the agent to have intelligence and memory and also exhibit complex behaviors like learning and adapting \cite{Bonabeau2002,Monge2003}.
In the context of diffusion, agent based modeling has been used, for example,  to study epidemic spread \cite{Bass1969,Perez2009}; to model information diffusion in virtual marketplace \cite{Neri2004}; and to simulate technological diffusion \cite{Amman2006,Ma2005} and environmental innovations \cite{Berger2001,Schwarz2009}.
Our research is not in the development of such models but in the
exploitation of such models in determining optimal seed sets.

\subsection{Outline of the Paper}
In the next section, we summarize the diffusion model from 
\cite{Hui2010a} and
give some analysis of its complexity in Section~\ref{sec:analysis}.
We present our projected greedy heuristic in
Section~\ref{sec:greedy}. In Section~\ref{sec:design}
we describe the experimental
design we used in extensively evaluating our projected greedy
heuristic on both synthetic networks as well as the San Diego network. We end
in Section~\ref{sec:disc}
where we give an overview of the results together with
a discussion.

\section{The Diffusion Model}
\label{section:model}

We summarize the diffusion model that was developed
in \cite{Hui2010a}. This model was used
to study propagation of evacuation news through a population
in~\cite{Hui2010a}.
We use it as our model of diffusion because it is general and contains many
other simpler models as sub-cases.
The model takes into consideration the fact that agents may act on information and leave the network; this means diffusion occurs on a
network that is dynamic. The model contains the
 concepts found in the widely studied SIR model of epidemiology, as well as standard threshold and cascade models together with actions a social agent may
perform, such as ``querying for more information'' as well as notions of 
trust.

We have an initial graph $G=(V,E)$ where $V$ are individuals in the
population and the edges $(u_i,u_j) \in E$ represent social connections
between individuals $u_i$ and $u_j$.
There are $K$ sources with \emph{information values}
$I_{1},I_{2},\hdots,I_{K}$.
Also, a source $k \in \{1,\hdots,K\}$ can seed $B_{k} \geq 0$ nodes.
We will use \math{k} to index sources and \math{i,j} to index nodes.

Associated with edge $(u_i,u_j)\in E$ is a trust value $0 \leq \alpha (u_i,u_j)\leq 1$. It represents the amount of trust node $u_j$ has on information provided by $u_i$. The graph may be directed with different trust values on edges $(u_i,u_j)$ and $(u_j,u_i)$ representing asymmetrical trust.
Each node $u \in V$ also has a similar trust value for each
source $k \in \{1,2,\hdots,K\}$ : $\alpha(u,k)$.

Each node $u$ possesses an {\em information-value set} $S(u) = \{(s_{1},v_{1}),\hdots,(s_{K},v_{K})\}$ for the $K$ information sources. It is made up of pairs
 $(s_{k},v_{k})$, where $v_{k}$ is the information node
\math{u} has from source $s_{k}$. Based on its information value set, a node calculates its {\em information value} as follows:
\mld{I(u) = \lambda_{d} \sum_{k = 1}^{K}v_{k} +
(1-\lambda_{d})\cdot \max_{k = 1,..,K}v_{k},\label{eq:Iu}
}
where $0\leq \lambda_{d} \leq 1$ is a model parameter.
The information value is a convex combination of the total information the node
has from all the sources and the maximum value the node has from any one source;
as such, \math{I(u)} is
at most this total information, and at least
the maximum (a node's information is at least the information of its
best source and at most the sum of all its sources).
The two extremes \math{\lambda_d=0} and \math{\lambda_d=1}
correspond to two different extremes of a diffusion. When
\math{\lambda_d=0}, a node ignores all information but its highest information
value source; this corresponds to a very conservative choice and
would be more appropriate for something like a warning.
When
\math{\lambda_d=1}, the information value is the sum, i.e., the
different sources reinforce each other. This is a more aggressive
diffusion, appropriate for something like gossip or a rumor.

Associated with each node $u$ are two thresholds, a lower threshold
$t_{l}(u)$ and an upper threshold
$t_{h}(u)$ such that $0\leq t_{l}(u) \leq t_{h}(u)$.
The thresholds determine how
a node acts given its information set.
Based on its information value,
a node $u$ lies in one of the following states:

\begin{itemize}
\item[1.] {\bf Disbelieved (low information value, less than the
lower threshold)}: If $I(u) < t_{l}(u)$ then the node does not believe it has any meaningful information. In this state, node \math{u} does
not take any action except for incorporating information-value sets
that it (node \math{u}) receives
from its neighbors.

\item[2.] {\bf Undecided (intermediate information value)}: If $t_{l}(u) \leq I(u) < t_{h}(u)$ then node \math{u}
believes it has some information but is uncertain about acting (which in our case is to
evacuate).
In this state, node \math{u} will
 \emph{query} the information-value sets of its neighbors and
incorporate any new information into its own information set.

\item[3.] {\bf Believed (high information value, above the upper
threshold)}: If $I(u) \geq t_{h}(u)$ then the node has enough information to accept the information as correct and act upon it. In this state, node \math{u}
actively propagates its information-value set to its neighbors for $\tau$ time-steps before evacuating. Here $\tau>0$ is a model parameter that determines the time it takes a believed node to evacuate the network (in the case of
a warning message). If \math{\tau=\infty}, the node never leaves the network
and propagates its information forever.

\item[4.] {\bf Evacuated}: $\tau$ time-steps after a node enters the Believed state, it evacuates resulting in its disconnection from the network. Hence the graph changes and a new graph $G'$ is obtained. The new graph \math{G'} is the induced subgraph on the vertices \math{V\setminus u}; that is
 $G' = (V',E')$, where $V' = V\setminus u$ 
and $E' = E\setminus\{(u,u_i)\,|\, u_i \in V', \, (u,u_i) \in E\}$.
\end{itemize}

\subsection{The Diffusion Process}

Initially, all nodes have zero information, so
$v_{k} = 0 \ \forall k \in \{1,\hdots,K\}$
  and all nodes are in the 
disbelieved state. Each node's information set is zero,
\mand{
S(u)=\{(s_1,0),\ldots,(s_K,0)\}.
}
The diffusion process begins when some nodes are \emph{seeded}
 with information from sources according to a seeding strategy.
Seeding entails transfer of information from sources to the information-value
set of the selected seed nodes.
This transfer of information gets attenuated by the trust between
the node and the source.
So if source $k$ seeds node $u$ then the information transfer to node $u$ has
 value $\a(u,k)\cdot I_{k}$ and the pair
\math{(s_k,0)} in the node's information-value set gets updated to
\math{(s_k,\a(u,k)\cdot I_{k})}.
Each source seeds some subset of the nodes in this way, and multiple
sources can seed the same node.
After seeding,
each node then computes its information value using \r{eq:Iu}, ascertains its state, and performs the required action at the next step.
At every consecutive step nodes take action
depending on their state (which may involve broadcasting its
information-value set and/or receiving information-value sets from its
neighbors); if any new information is received, this information is merged
into its current information-value set. In this way a node's information-value
 set
evolves as the diffusion process proceeds.

We now describe the process of propagating and updating information sets.
When node $u_i$ propagates its information-value set to neighbor $u_j$,
the value of information gets similarly attenuated by a
factor of $\alpha(u_i,u_j)$. If
\mand{
S(u_i)=\{(s_1,v_1),\ldots,(s_K,v_K)\},
}
then the information-value set that is received by \math{u_j} is
\mand{\a(u_i,u_j)\cdot S(u_i)=\{(s_1,\a(u_i,u_j)\cdot v_1),\ldots,(s_K,\a(u_i,u_j)\cdot v_K)\}.
}
Consider a source \math{k}, and consider node \math{u_i}.
The node \math{u_i} may receive information value originating from
source \math{k} either directly from source \math{k} or indirectly from one
of its neighbors. Suppose that node \math{u_i} has
\math{\delta} neighbors. In principle \math{u_i} can receive (either in the
current step or in some prior step) information-value
\math{v^0_k} (directly from the source) and information values
\math{v^1_k,v^2_k,\ldots,v^\delta_k} from each of its neighbors (each of
these information
values will be the attenuated value that was propagated to \math{u_i};
some or all of these values could be zero).
In order to determine its information value for this source \math{k}, the node
needs to \emph{fuse} all these information values into a single value as
follows
\remove{
 In this way after propagation a node might end up with multiple
source-value pairs containing
information from \emph{the same source}, including the information it
already contained. For example if all neighbors of \math{w} have information
value for source \math{s_1}, they may each propagate this information to
\math{w}.
 Let's say that node $w$ has received $m$ source-value pairs (including
its own one) containing source $s_{k}$ with associated values: $v_{k}^{1},v_{k}^{2},\hdots,v_{k}^{m}$. Then node $w$ will construct a \emph{fused}
 source-value pair $(s_{i},v_{i}^{*})$ as follows:
}
\mld{
v_{k} = \lambda_{s}\cdot \sum_{j =0}^{\delta}v_{k}^{j} +
(1-\lambda_{s})\cdot \max_{j=0,..,\delta}v_{k}^{j}.
\label{eq:fused}
}
Again $\lambda_{s}$ is a model parameter such that $0 \leq \lambda_{s}\leq 1$.
This fusion of information happens for every source \math{k} at every node
\math{u_i}.
As with \math{\lambda_d}, \math{\lambda_s} impacts how aggressive a 
diffusion is. At the two extremes:
\math{\lambda_s=0}, a node takes the maximum value it hears about the
information from a source (the conservative case); and,
 when \math{\lambda_s=1} a node takes all the information it hears about a 
source and adds (the aggressive diffusion).
Thus, at each consecutive time step, every node updates its information-value
set based on the new information that was propagated to it. A node calculates
its new information value based on this updated set and this updated information
value will be used to determine the node's state and possible action.
The diffusion process continues, i.e. information continues
to propagate as described above, until either all nodes evacuate or
there is no change in the information value sets of the nodes.

The graph \math{G} is typically given and chosen to model some social network,
as for example the San Diego network during the 2007 fires.
The model parameters, $(\lambda_{s}, \lambda_{d})$, the evacuation time
\math{\tau},
the threshold parameters \math{t_l(u),\ t_h(u)} at each node 
\math{u} and the
edge trust values \math{\a(u_i,u_j)} can be chosen to model various
types of diffusion settings. For example in a social network with
strong communities, the trust values of edges within a community would be
high (close to 1) and the edges between communities would typically be
low (close to 0).
In a network that has been ``primed'', for example a community that has
recently experienced a tsunami, the community may be in a ``panic'' state,
which could be modeled by very low upper thresholds \math{t_h(u)}. One would
also set the thresholds to be low for low risk diffusions like gossip.
However for  actions that incur significant cost, like a node evacuating, the
lower threshold may be low but the upper threshold would typically be high.
Lastly the \math{\lambda}-parameters could be chosen to model fast or slow
diffusing information.
In all cases, however, all these parameters are exogenously specified.
An instance of the general diffusion model \math{\mathbb{G}} is
summarized in Figure~\ref{fig:instance}.
Our goal is to optimally \emph{seed} the diffusion given
the graph and the parameter settings.
\begin{figure}
\begin{center}
\fbox{
\parbox{4.55in}{
{\bf Instance \math{\mathbb{G}}}
\\
Graph \math{G=(V,E)} specifying the network for the diffusion.
\\
Source information values and budgets, 
\math{(I_1,B_1),\ldots,(I_K,B_K)}
\\
Trust values \math{\a(u_i,u_j)} for all edges \math{(u_i,u_j)\in E}.
\\
Trust values \math{\a(k,u_j)} between each source \math{k} and each node
\math{u_j\in V}.
\\
Diffusion parameters \math{\lambda_d,\lambda_s,\tau}
\\
Lower and upper thresholds
\math{t_l(u),t_h(u)} for each node \math{u\in V}.
\\
{\bf Desired output:} seed sets \math{\p_1,\ldots,\p_K} for each source 
\math{k=1,\ldots,K}.
}}
\end{center}
\caption{An instance of the general diffusion model.\label{fig:instance}}
\end{figure}

\subsection{Seeding the Diffusion}
The goal of this study is to find a seeding strategy that maximizes the number of nodes that end up in their Believed state, given the diffusion setting
as described in the previous section, and a seeding budget, as we describe here.

Each source \math{k} has a budget \math{B_k} of nodes which it can seed. Our
task is to determine the seed sets \math{\p_k\subset V}, where
\math{|\p_k|\le B_k}. The seed set \math{\p_k} specifies which nodes
are seeded by source \math{k}. The objective is to maximize the
number of nodes that become believers. Thus, there is some function
\math{\g}, which we call the \emph{coverage function}, that computes
the number of nodes which become believers, given the diffusion setting
and the seed sets \math{\p_1,\ldots,\p_K}. More generally, if there
is some randomization in the communication (propagation of information-value
sets)\footnote{For example if, when a node \math{u_i}
 propagates an information-value set to \math{u_j}, it is
received at the other end with some probability \math{p(i,j)} depending on the
communication infrastructure, then the number of nodes which are ultimate
believers is a random variable.}, then
the coverage function would compute the expected number of believers. Thus,
in general, the coverage function maps
\math{(\p_1,\ldots,\p_K)} to \math{\mathbb{R}_{\geq 0}}.

One of the challenges is to efficiently compute the coverage function
\math{\g} for a given input seeding \math{(\p_1,\ldots,\p_K)}.
The other, which is our main goal,
is to find the seeding which maximizes
\math{\g}.

\remove{
For this purpose we define a {\em coverage function} on the nodes $\g : ({2^{V}})^{K}\rightarrow \mathbb{R}_{\geq 0}$. The input of this function is $K$ sets of nodes, each of size at most $S_{i}$, that are seeded by the source nodes. It gives the number of nodes converted to Believed state as output. Therefore, the objective is to find $K$ sets of nodes $\{\p_{1},\p_{2},\hdots, \p_{K}\}$ where each $\p_{i} \in V$ and $|\p_{i}| \leq S_{i}$, such that $\g(\p_{1},\p_{2},\hdots, \p_{K})$ is maximized.
}

\section{Analysis}
\label{sec:analysis}

It is useful to have a theoretical understanding of the diffusion
process in order to identify where the potential difficulties
lie when choosing an optimal seed set.
In fact, as we will soon see, the general model  described
in the previous section is extremely difficult for theoretical analysis.
As a result,
we will consider a simplified instance of the diffusion
model in our theoretical analysis, and use the insight from this analysis to develop a seeding algorithm for the general model.
When testing the algorithm, however, we will use instances of the general
diffusion process (see Section \ref{section:model}).

\subsection{Monotonicity and Submodularity}

The complexity of seed selection for the general diffusion model
results from the behavior of the coverage function \math{\g} when you perturb
the seed set. If \math{\g} behaves well when you perturb the seed set
in certain ways, then simple, iterative greedy algorithms are effective
at selecting near-optimal seed sets.

\paragraph{Monotonicity.}
The coverage function \math{\g} is monotone if adding nodes to a
seed set can only increase the coverage function's value. Mathematically,
if \math{\p\subseteq\p'}, then
\mld{
\g(\p,\p_{-i})\le\g(\p',\p_{-i}).\label{eq:mon}
}
(We use the notation \math{\g(\cdot,\psi_{-i})} to denote the coverage function
as a function of its \math{i}th set argument, keeping all the other sets
fixed.)
It is intuitive that if you seed more nodes in an evacuation scenario, then
more people should evacuate.

\paragraph{Submodularity.} Submodularity captures the intuitive notion of
diminishing returns. As you seed more and more, the additional benefit you
get in terms of the increased coverage is decreasing.
One of the equivalent mathematical
definitions of submodularity is as follows.
\math{\g} is submodular, if for all sets $\p\subseteq\p'$ and
any \math{A\subseteq V},
\mld{
\g(\p\cup A,\p_{-i})-\g(\p,\p_{-i})
\ge
\g(\p'\cup A,\p_{-i})-\g(\p',\p_{-i}),\label{eq:submod}
}
that is the increase in \math{\g} from \math{\p'\rightarrow\p'\cup A} is not
more than its increase from \math{\p\rightarrow\p\cup A}.
Submodularity is essentially the set-function version of concavity.

A coverage function that is submodular has the nice property that if you
use a simple greedy strategy to select a seed set, then the resulting
seed set has a coverage which is within a small constant factor 
of optimal. Unfortunately, as we are about to demonstrate, the coverage function
for our general diffusion model is not submodular. It is not even
monotone. The next two examples illustrate why. In both examples we
set \math{\lambda_d=\lambda_s=0}, so we are using the max function for
both the computation of the information value and for the fusion
of information received from neighboring nodes on the same source; we sometimes
call this the \emph{max-max} model. We have only one source (\math{K=1}) and its
information value is \math{I_1=1}.

\paragraph{Example 1: non-monotonicity:}  The diffusion setting is shown
in Figure~\ref{fig:monotonicity}. There is one low trust
edge (indicated in red) and a single
``high strung'' node $c$ which has very low thresholds (also indicated in
red); we will soon see why we call this node
high strung.
\providecommand{\st}[1]{\scalebox{3.5}{#1}}
\providecommand{\stn}[1]{\scalebox{2.25}{#1}}
\providecommand{\stb}[1]{\scalebox{5}{#1}}
\providecommand{\sth}{\scalebox{2.5}{\math{\left.\begin{array}{c}\\\\\\\\\\\\\end{array}\right\}}}\ \ }
\begin{figure}[h!]
\begin{center}
\begin{tabular}{m{2.5in}@{\hspace*{0.5in}}m{2.5in}}
\resizebox{2.5in}{!}{\input{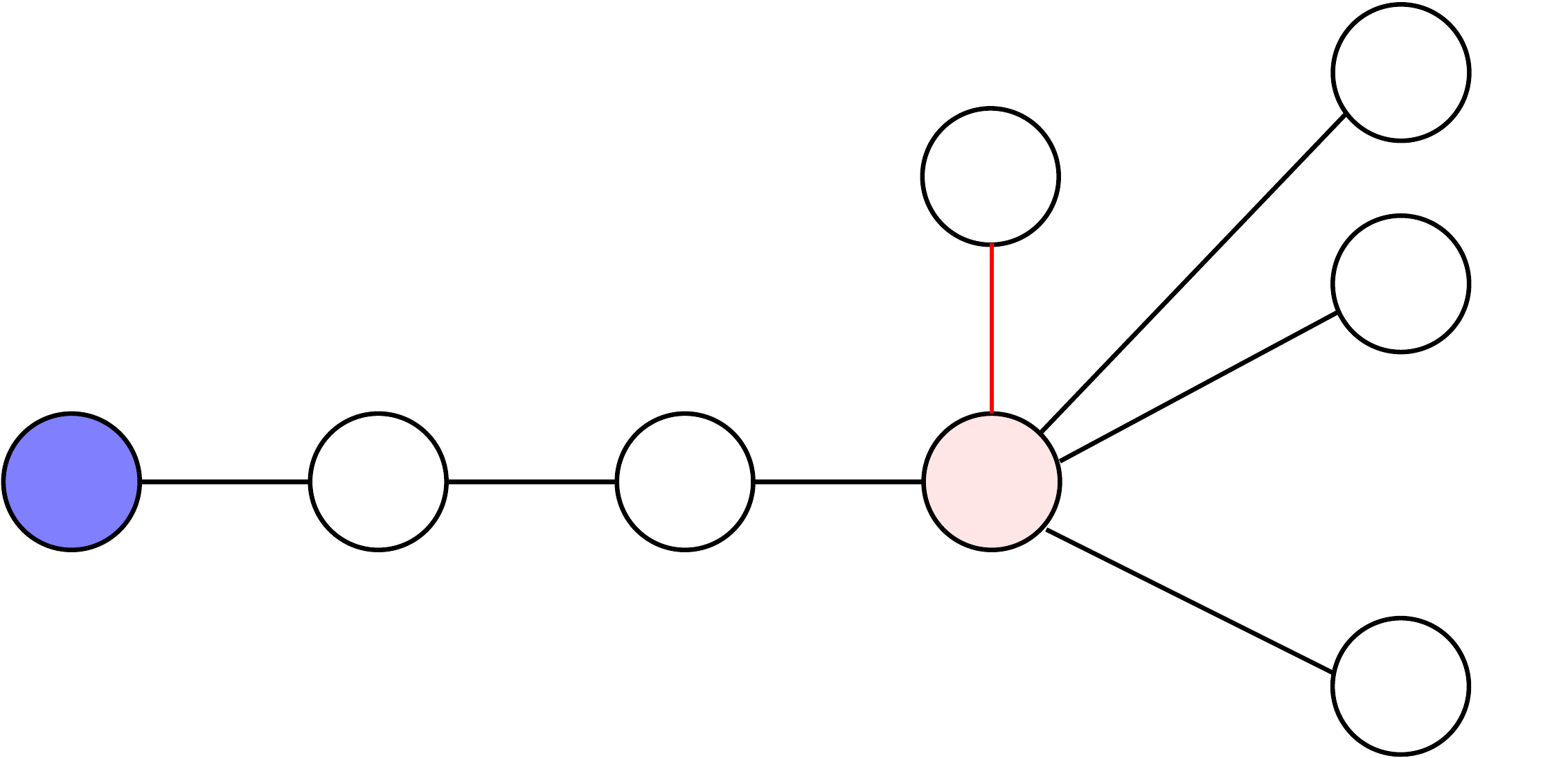_t}}
&\resizebox{2.5in}{!}{\input{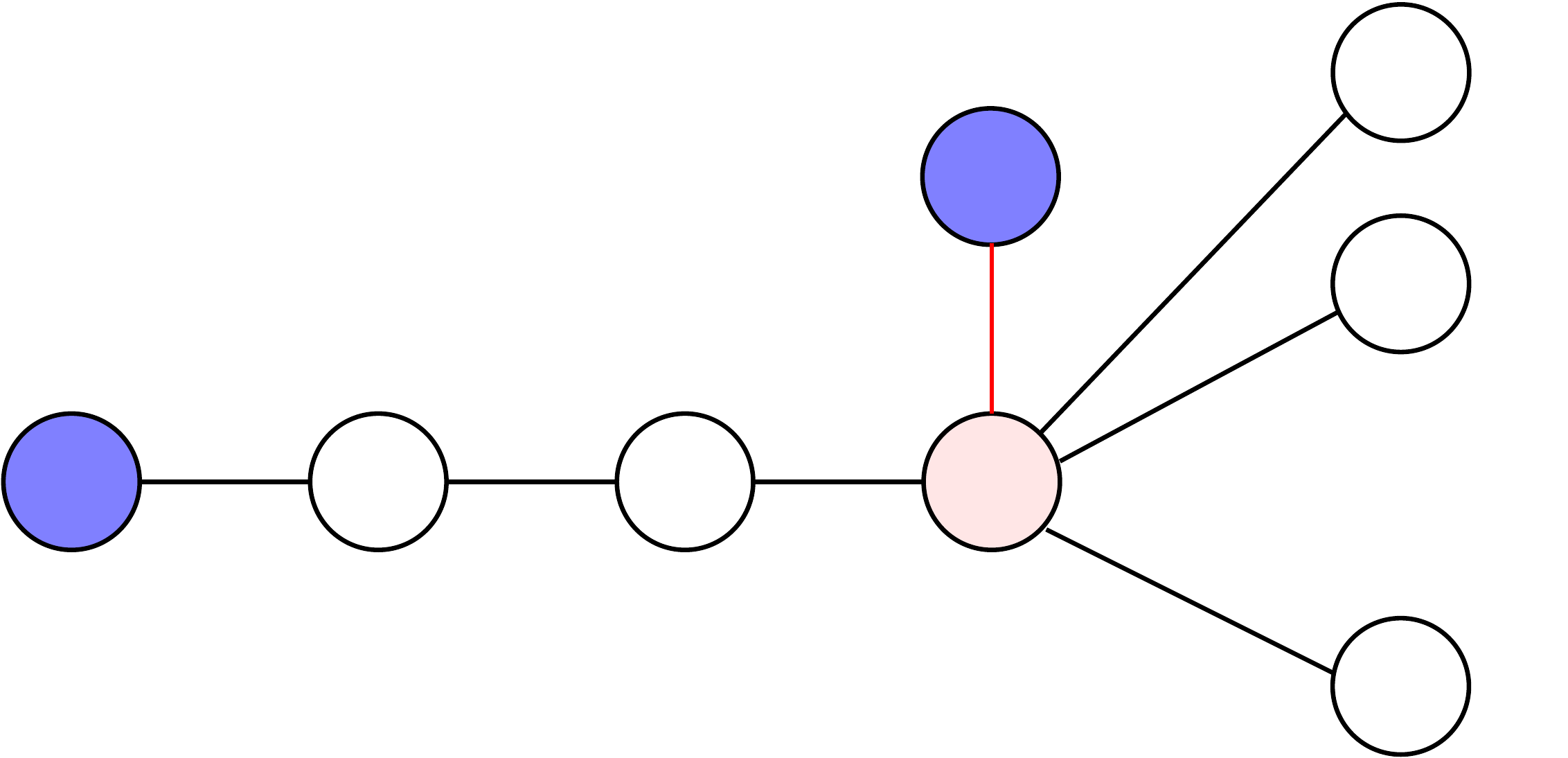_t}}\\
\centerline{\scalebox{2}{\math{\downarrow}}}&
\centerline{\scalebox{2}{\math{\downarrow}}}\\
\resizebox{2.5in}{!}{\input{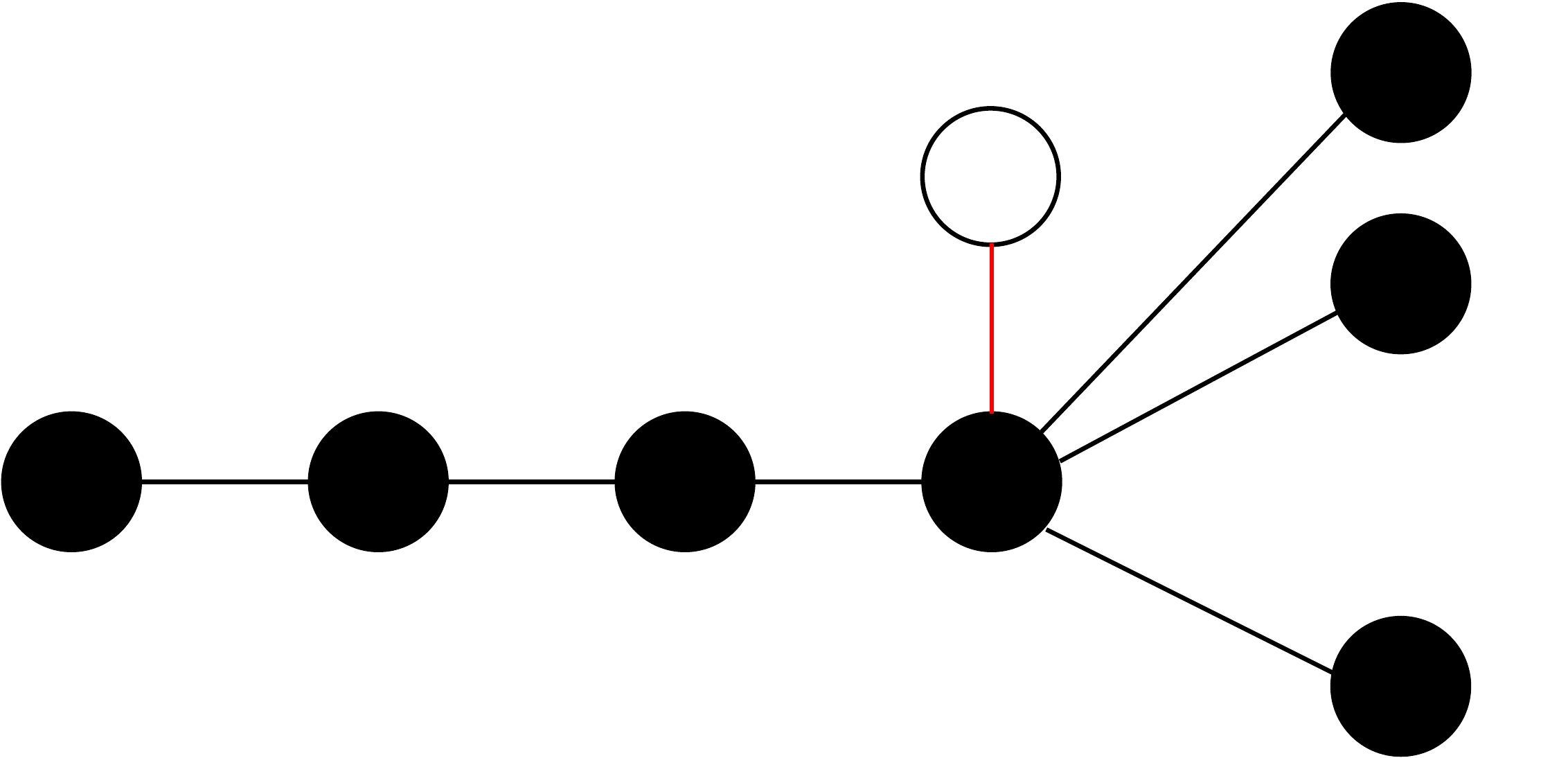_t}}
&\resizebox{2.5in}{!}{\input{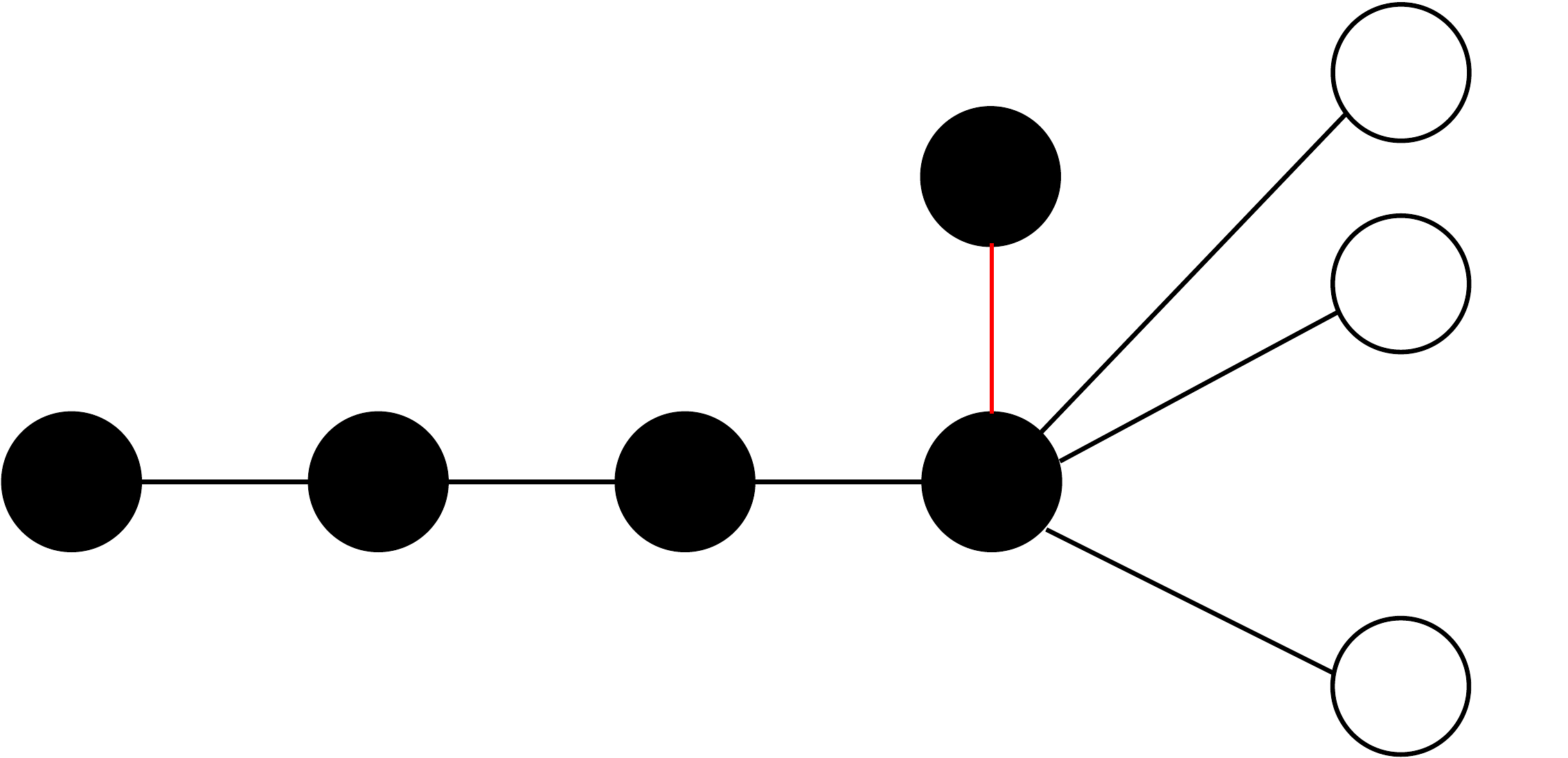_t}}\\
\centerline{(a)}&\centerline{(b)}
\end{tabular}
\caption{\small
Non-monotonic diffusion setting. The blue nodes are the initial
seed set. The black nodes are the evacuated nodes in the final state. All
thresholds are 0.5 except for the high strung node \math{c} with threshold
\math{0.1}. The number of steps to evacuation is \math{\tau=1}.}
\label{fig:monotonicity}
\end{center}
\end{figure}

Consider the coverage function for this diffusion setting and
two different seed sets
$\p = \{a\}$ (case (a) in Figure~\ref{fig:monotonicity})
and $\p' = \{a,b\}$ (case (b) in the Figure~\ref{fig:monotonicity}) --
in Figure~\ref{fig:monotonicity}) the seed sets are
shaded blue for both cases. In case (a), the information value of
\math{1} will propagate unattenuated to node \math{c} and beyond to the
additional \math{k} nodes. All these nodes will evacuate. Only an
information value of \math{0.1} propagates to node \math{b} which will
not evacuate.
So, $\Gamma(\p) = k+4$.
Now consider case (b) with seed set $\p'$.
Initially node \math{c} will receive attenuated information of value
\math{0.1} from node \math{b}. Since node \math{c} has very low thresholds,
in a sense it panics, and leaves the network almost immediately
(\math{\tau=1}),
after propagating its information value of 0.1. Unfortunately, however,
this value of 0.1 is not high enough to evacuate
 any of the \math{k} peripheral nodes, since their thresholds are high.
The nodes between \math{a} and \math{c} are fine, though, because they
will eventually receive an information value of \math{1} propagated from
\math{a}. The important thing is that when \math{c} leaves, it cuts off the
\math{k} peripheral nodes from \math{a} which results in only
\math{5} nodes evacuating.
Hence, $\Gamma(\p') = 5$.

The high strung node \math{c} will leave if it gets only a small amount of information.
If, as in this case, the high strung node is crucial (is a ``bridge'' node in the network), then when the high strung node leaves, it disconnects potentially
large parts of the network from the information before the higher
value information has a chance to flow into those parts of the network.
By increasing the seed set, one might increase the chances that such a
high strung node gets low-value information too early in the diffusion,
and the ensuing early evacuation of this high strung node is what leads to the
non-monotonicity of the coverage function.
In fact, if nodes do not leave the network until all information propagation has
occurred, then monotonicity is guaranteed. As the next example will show, however, even if nodes do not leave the network,
the coverage function is still not submodular.

\paragraph{Example 2: non-submodularity.}
The diffusion setting is
shown in Figure~\ref{fig:submodularity}. We set the evacuation time \math{\tau=10} (i.e., large
enough so that nodes only start leaving after all information has propagated).
\providecommand{\st}[1]{\scalebox{3.5}{#1}}
\providecommand{\stn}[1]{\scalebox{2.25}{#1}}
\providecommand{\stb}[1]{\scalebox{5}{#1}}
\providecommand{\sth}{\scalebox{2.5}{\math{\left.\begin{array}{c}\\\\\\\\\\\\\end{array}\right\}}}\ \ }
\begin{figure}[h!]
\begin{center}
\begin{tabular}{m{2.5in}@{\hspace*{0.5in}}m{2.5in}}
\resizebox{2.5in}{!}{\input{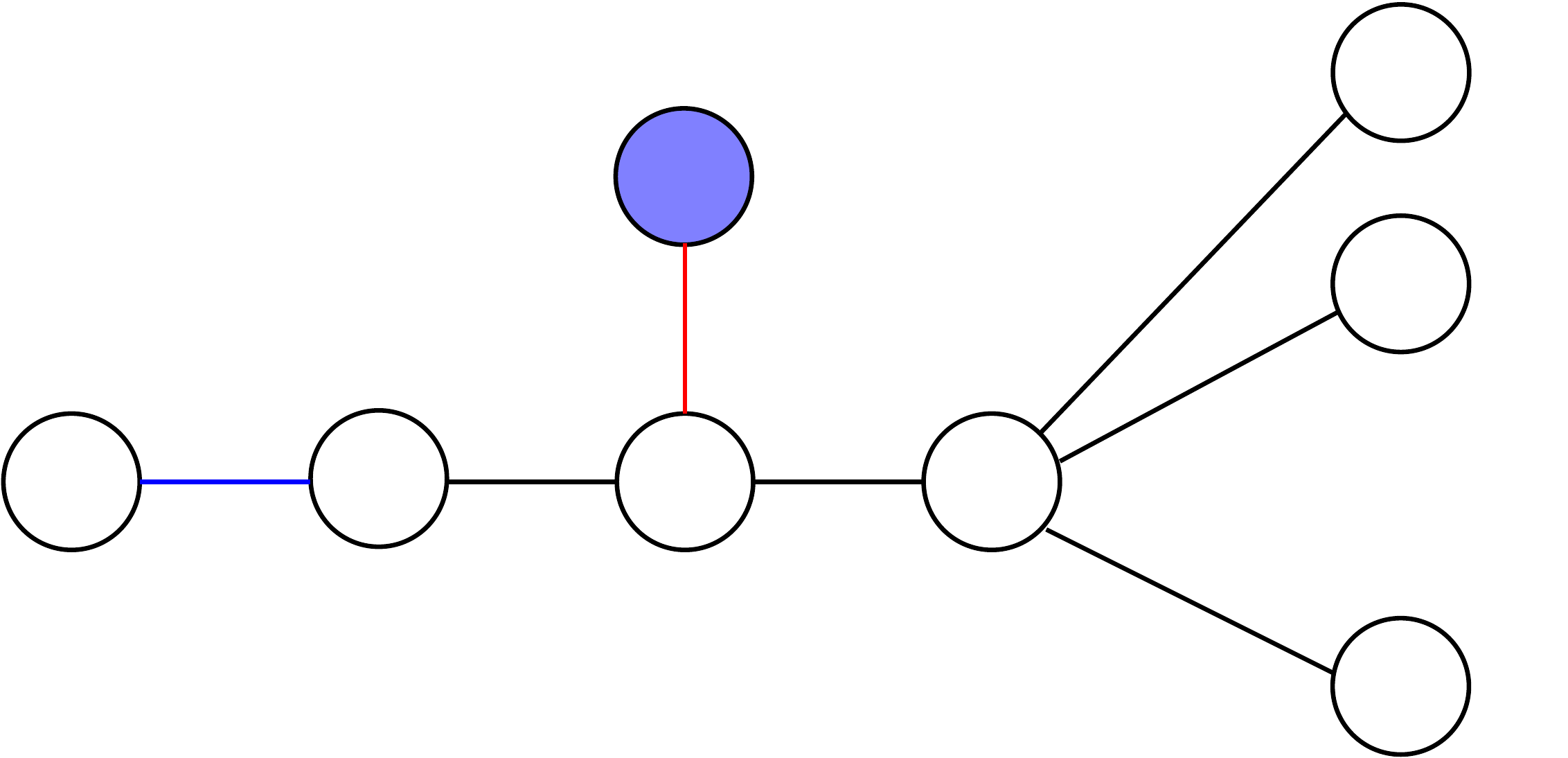_t}}
&\resizebox{2.5in}{!}{\input{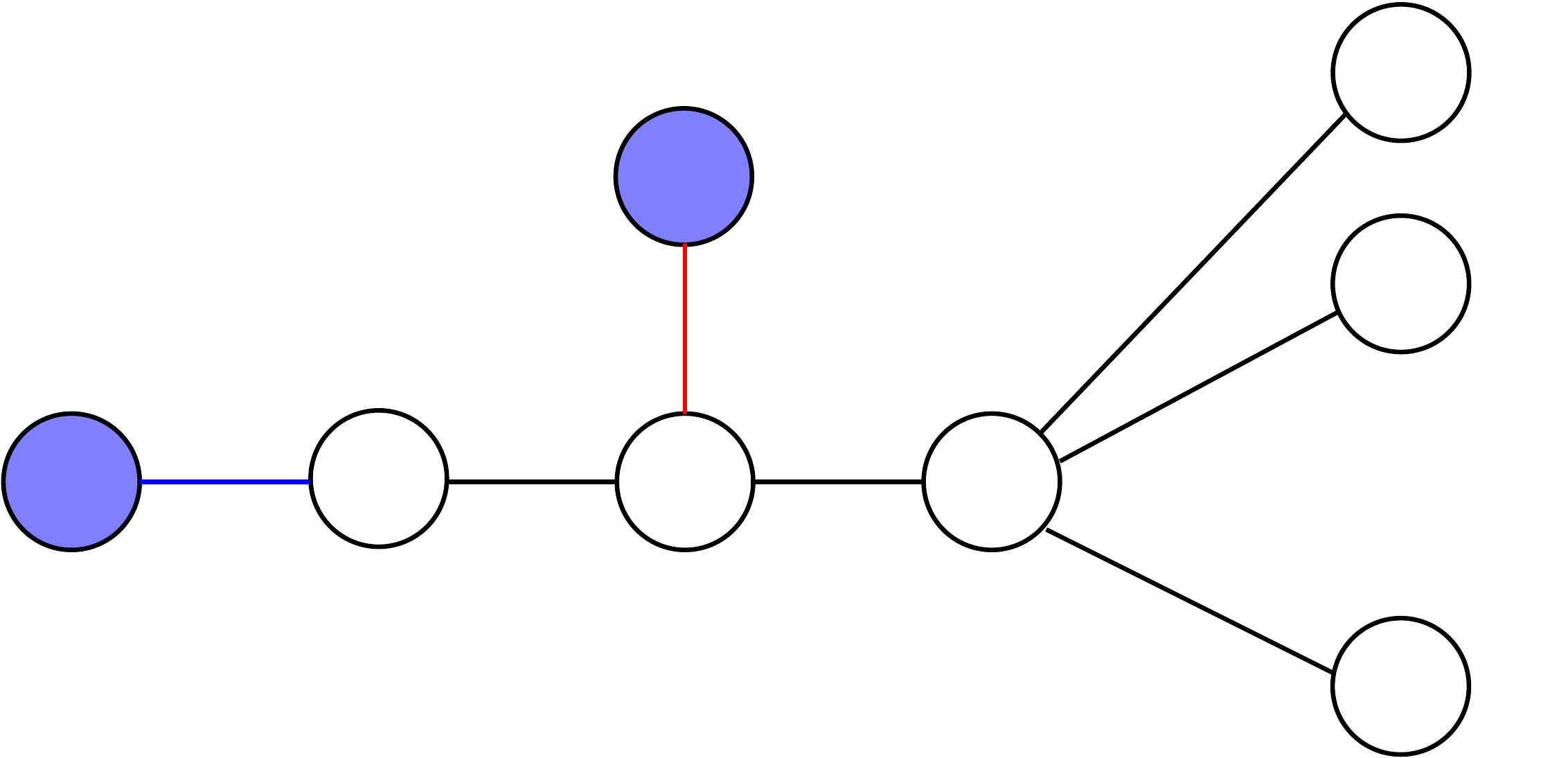_t}}\\
\centerline{\scalebox{2}{\math{\downarrow}}}&
\centerline{\scalebox{2}{\math{\downarrow}}}\\
\resizebox{2.5in}{!}{\input{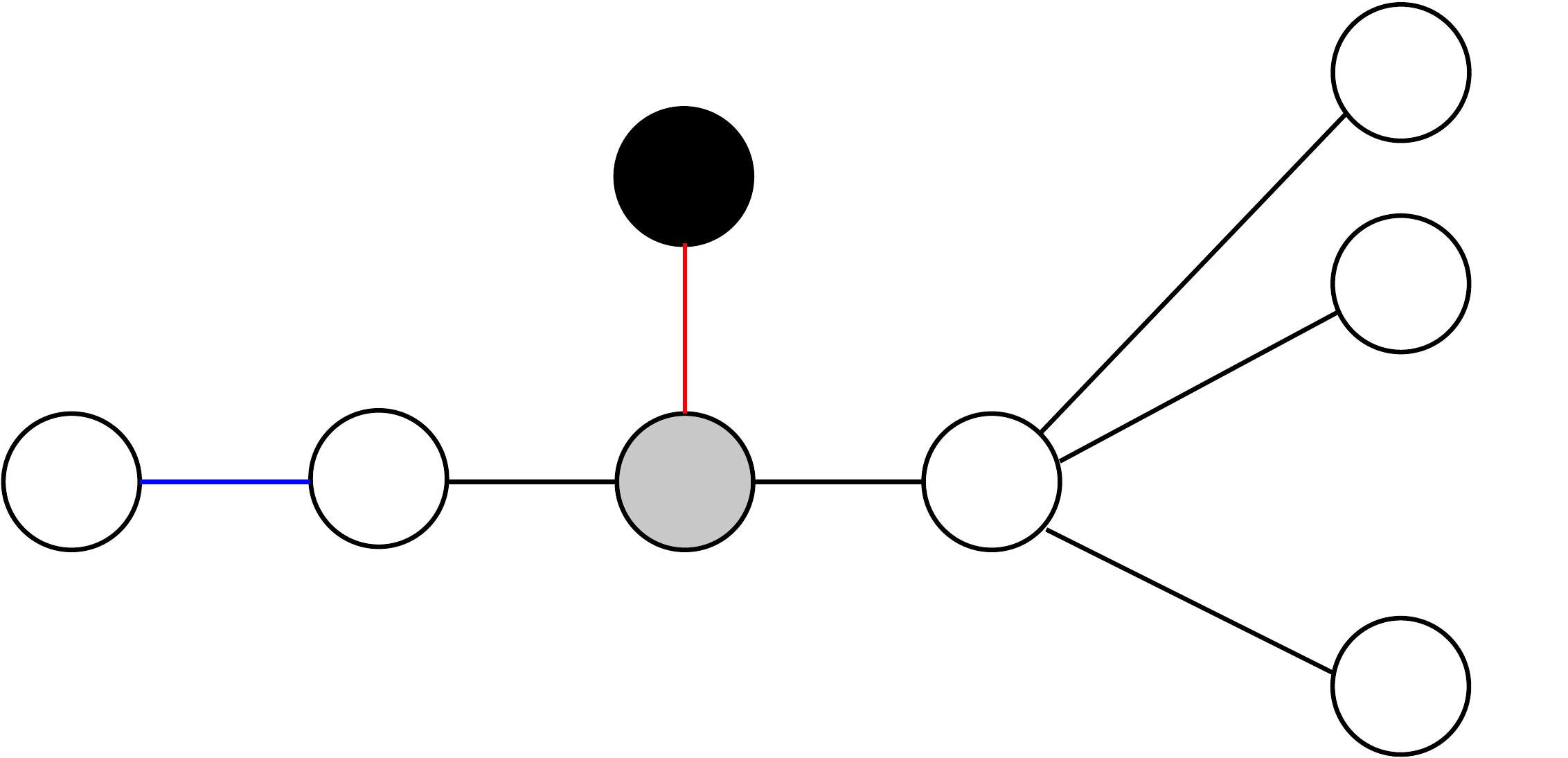_t}}
&\resizebox{2.5in}{!}{\input{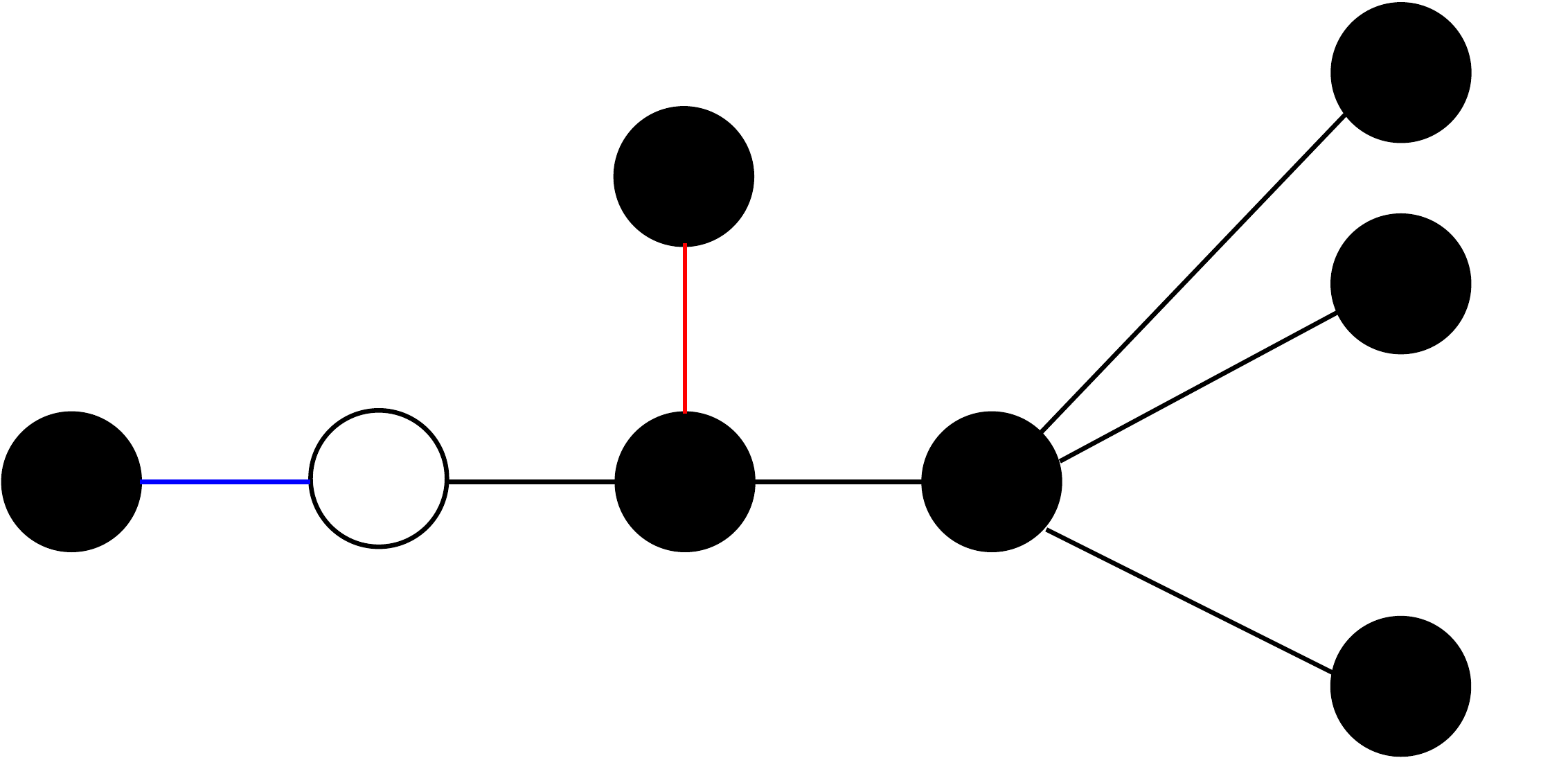_t}}\\
\centerline{(a)}&\centerline{(b)}
\end{tabular}
\caption{\small
Non-submodular diffusion setting. The blue nodes are the initial
seed set. In the final state, the black nodes are the evacuated and the light gray are undecided . All
thresholds are 0.5 except for nodes \math{c} and \math{d}: for node \math{d},
$t_{l}(d)=t_{h}(d)=0.91$ and for node \math{c},
$t_{l}(c)=0.1$, $t_{h}(c)=0.9$.
The number of steps to evacuation is \math{\tau=10}.}
\label{fig:submodularity}
\end{center}
\end{figure}

Now consider the seed sets $\p=\emptyset$, $\p' = \{a\}$.
Clearly \math{\g(\p)=0}. As for \math{\p'}, if you seed \math{a},
an attenuated information of \math{0.9} reaches \math{d} which is not enough
to breach the threshold of \math{0.91}. Hence only \math{a} evacuates and
so \math{\g(\p')=1}. We now consider
\math{\p\cup\{b\}} and \math{\p'\cup\{b\}}. We will see that
\mld{\label{eqn:inequality}
\g(\p\cup\{b\})-\g(\p) < \g(\p'\cup\{b\})-\g(\p'),
}
which contradicts submodularity, since \math{\p\subseteq\p'}.
First consider \math{\p\cup\{b\}=\{b\}} which is case (a)
in Figure~\ref{fig:submodularity}.
An information value of \math{0.1} propagates to \math{c} which will therefore
enter the undecided state since \math{t_l(c)} is low. However, the neighbors
cannot provide any new information so nothing further will happen. Therefore,
only \math{b}
evacuates.

Now consider \math{\p'\cup\{b\}=\{a,b\}} which is case (b)
in Figure~\ref{fig:submodularity}.
As before, \math{0.9} propagates to \math{c} and \math{0.9} to \math{d}. But,
because \math{c} is in the undecided state, it will \emph{query} 
\math{d} and
receive information of \math{0.9} which is enough to push \math{c}
to become a believer. Now, \math{c} will propagate its information value of
\math{0.9} to \math{e} and beyond.
Node \math{d}, however, will always remain below its threshold of
\math{0.91}. Hence \math{\g(\p'\cup\{b\})=k+4}.
So
\mand{
\g(\p\cup\{b\})-\g(\p)=1<k+3=\g(\p'\cup\{b\})-\g(\p').
}
The fact that node $c$ could query for information that $d$
did not transmit is what resulted in the cascade effect
which converted node \math{e} and beyond to the
 Believed state. Though it is natural for humans in a social diffusion
process to query, it is precisely this ability to query that leads to
non-submodularity and the resulting complexity of the process.

\subsection{A Simplified Model}
\label{section:simplified}

Taking a cue from the examples above, we define a simplified model that is a
strict subcase of the general diffusion model defined earlier. Note that while doing experimental simulations we will use the general diffusion model. We use the simplified model only to carry out theoretical analysis.
Insights obtained for this simplified model help us develop heuristics that we then apply to the general case.

In this simplified
 model $\lambda_{d}=\lambda_{s} = 0$. So, the information value at a node
is the maximum value among source-value pairs, i.e.,
$I(u) = \max_{k=1,..,K}v_{k};$
similarly, when fusing information arriving via different neighbors regarding
the same source we take the maximum; we call this the
\emph{max-max} model.
We also assume all
 nodes have a single threshold, so  $t_{l}(u) = t_{h}(u)=t(u)$.
This eliminates the Undecided state which, as we saw,
leads to non-submodularity in the coverage function.
To ensure monotonicity, we choose the evacuation time \math{\tau} to be
sufficiently large (e.g., $\tau\gg|E|$).
This means that nodes do not leave the network until all information
propagation has occurred. For this simplified \emph{max-max} model, we prove
a simple lemma that characterizes the nature of the coverage function. We
are interested in
\mand{
\g(\p_1,\ldots,\p_K).
}
Define the \emph{singleton} coverage (set) function
\math{\gamma(u,k)} to be the \emph{set} of nodes ultimately
 converted to the Believer state when
only one node \math{u} is seeded by a source $k$
with information value \math{I_k}.
Thus, $\g(\p_1,\ldots,\p_K)$ equals $|\gamma(u,k)|$ when $\p_i=\emptyset$ for all $i\neq k$, and $\p_k=\{u\}$.
The next lemma characterizes the form
of \math{\g(\p_1,\ldots,\p_K)}. It basically says that to obtain
the set of nodes
converted to the Believer state, it suffices to consider
each source and let it just seed one of the nodes in its seed set,
with no other source seeding any nodes.
Some set of nodes is converted to Believer in this
process. We consider, in this way, all the (source,seed) pairs in turn,
computing the converted set when just this source seeds just this one
node. By taking the union of all these converted sets, we get the
final set of nodes converted when all the sources simultaneously seed
all the nodes in their respective seed sets.
\begin{lemma}\label{lem:Gamma}
The set of nodes converted to Believer with seed sets \math{\p_1,\ldots,\p_K}
is
\mand{
\bigcup_{k=1}^K\bigcup_{x\in\p_k}\gamma(x,k).
}
\end{lemma}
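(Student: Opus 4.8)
The plan is to prove both inclusions of $B(\Psi)=\bigcup_{k=1}^{K}\bigcup_{x\in\psi_k}\gamma(x,k)$, writing $\Psi=(\psi_1,\ldots,\psi_K)$, $B(\Psi)$ for the ultimate believer set under $\Psi$, $\Psi^{(x,k)}$ for the single‑seed configuration in which only $x$ is seeded, by source $k$ (so $\gamma(x,k)=B(\Psi^{(x,k)})$), and $v_k^{(t)}(u;\Psi)$ for the source‑$k$ value at node $u$ after $t$ steps. Two features of the simplified max‑max model (with $\tau$ large) will be used repeatedly: believers never leave during propagation, and $v_k^{(t)}(u;\Psi)$ is non‑decreasing in $t$ and, by unrolling the max‑fusion rule \r{eq:fused}, equals the maximum of $\alpha(k,x)I_k\prod_r\alpha(\cdot,\cdot)$ over all \emph{value‑carrying walks} $x=w_0,\ldots,w_m=u$ with $x\in\psi_k$ whose internal nodes are all believers at the relevant earlier steps. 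The inclusion $\bigcup_{k,x}\gamma(x,k)\subseteq B(\Psi)$ is monotonicity of the believer set in the seeding: for $x\in\psi_k$ each seed set of $\Psi^{(x,k)}$ sits inside the corresponding seed set of $\Psi$, and a joint induction on $t$ gives $v_j^{(t)}(u;\Psi^{(x,k)})\le v_j^{(t)}(u;\Psi)$ together with $B^{(t)}(\Psi^{(x,k)})\subseteq B^{(t)}(\Psi)$ — every node propagating in the smaller process also propagates in the larger, so the max‑updates only grow. Hence $\gamma(x,k)\subseteq B(\Psi)$, and we take the union.

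For the reverse inclusion I will prove the sharper statement: \emph{for every $u\in B(\Psi)$ there are a source $j$, a seed $y\in\psi_j$, and a walk $y=u_0,\ldots,u_p=u$ in $G$ with attenuated values $\rho_q:=\alpha(j,y)I_j\prod_{r=1}^{q}\alpha(u_{r-1},u_r)\ge t(u_q)$ for every $q$.} This finishes the proof, since running $\Psi^{(y,j)}$ and inducting along the walk shows each $u_q$ acquires source‑$j$ value $\ge\rho_q\ge t(u_q)$ and so becomes a believer, whence $u=u_p\in\gamma(y,j)$. I prove the statement by strong induction on $T_u$, the first step at which $u$ becomes a believer under $\Psi$. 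If $T_u=0$, then $u$ is seeded by some $j$ with $\alpha(j,u)I_j\ge t(u)$, and $y=u$, $p=0$ works. If $T_u\ge1$, a believing neighbour $w$ delivers to $u$ the value $\alpha(w,u)v_k^{(T_u-1)}(w;\Psi)\ge t(u)$ for some source $k$; tracing $v_k^{(T_u-1)}(w;\Psi)$ back along a realizing walk yields a seed $x_0\in\psi_k$ and a walk $x_0=w_0,\ldots,w_m=w$ with every $w_i$ a believer, and appending $w_{m+1}:=u$ gives a non‑increasing sequence $c_i:=\alpha(k,x_0)I_k\prod_{r=1}^{i}\alpha(w_{r-1},w_r)$ with $c_{m+1}\ge t(u)=t(w_{m+1})$. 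If $c_i\ge t(w_i)$ for all $i$, this walk already proves the statement for $u$ (with seed $x_0$, source $k$). Otherwise let $i^\ast$ be the largest index with $c_{i^\ast}<t(w_{i^\ast})$, so $c_i\ge t(w_i)$ for every $i>i^\ast$; the node $w_{i^\ast}$ is a believer under $\Psi$ and, unrolling the value‑carrying walk $w_{i^\ast},\ldots,w_m,u$, became one strictly before step $T_u$, so the inductive hypothesis supplies a walk $y=u_0,\ldots,u_\ell=w_{i^\ast}$ with $\rho_q\ge t(u_q)$ throughout and $\rho_\ell\ge t(w_{i^\ast})>c_{i^\ast}$. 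Concatenating with $w_{i^\ast},w_{i^\ast+1},\ldots,w_{m+1}=u$ and keeping source $j$ gives the walk for $u$: on the appended tail the value at $w_{i^\ast+s}$ equals $\rho_\ell\prod_{r=1}^{s}\alpha(w_{i^\ast+r-1},w_{i^\ast+r})>c_{i^\ast}\prod_{r=1}^{s}\alpha(\cdot,\cdot)=c_{i^\ast+s}\ge t(w_{i^\ast+s})$, using $\rho_\ell>c_{i^\ast}$ and the choice of $i^\ast$.

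The hard part is exactly this reverse inclusion, and the obstacle it overcomes is that the naive ``trace the value back to a single seed'' argument breaks down: an internal node $w_{i^\ast}$ of the value‑carrying chain may owe its believer status to a \emph{different} source, so that chain does not survive the restriction to the single‑seed process $\Psi^{(x_0,k)}$. The fix is the split at $i^\ast$, where the replacement prefix delivers value $\rho_\ell\ge t(w_{i^\ast})>c_{i^\ast}$ that strictly dominates the old prefix value; because attenuation factors multiply, this domination propagates down the shared suffix and keeps every attenuated value above the corresponding threshold — precisely the invariant that lets the walk be realized inside one single‑seed process. The only remaining routine point is the timing claim $T_{w_{i^\ast}}<T_u$, which is a short unrolling of the max‑fusion recursion (a value realized along a walk with believing internal nodes forces each internal node to have turned believer at least one step before the next node receives the value); repeated vertices along the constructed walk cause no trouble, as they merely re‑transmit already‑delivered values.
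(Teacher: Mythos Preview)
Your proof is correct, but it takes a substantially more elaborate route than the paper's. For the forward inclusion you and the paper do essentially the same thing: a timewise induction showing that values in the single-seed process are dominated by values in the joint process. For the reverse inclusion, however, the paper avoids path-tracing entirely. It proves directly, by induction on the time step $m$, that $I'_m(u)\le I^{\max}_m(u)$ where $I^{\max}_m(u)$ is the maximum of $I_m(u)$ over all singleton seedings $(x,k)$ with $x\in\psi_k$. The inductive step is a two-line contradiction: if $I'_m(u)$ changed at step $m$ via a propagating neighbour $v$, then $I'_m(u)=\alpha(v,u)I'_{m-1}(v)$; by induction some singleton process has $I_{m-1}(v)\ge I'_{m-1}(v)\ge t(v)$, so in that process $v$ also propagates and $I_m(u)\ge\alpha(v,u)I_{m-1}(v)\ge I'_m(u)$. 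Thus any node converted in the joint process is converted in some singleton process.

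The ``obstacle'' you identify --- that an intermediate node on the value-carrying chain may owe its believer status to a different source --- is genuine for a path-based argument, and your splice at $i^\ast$ handles it cleanly. But the paper's value-comparison approach sidesteps this issue completely: it never commits to a fixed source or seed along the way, only to the pointwise maximum over all singletons at each time step, and the contradiction automatically picks out whichever singleton dominates at the neighbour $v$. What your approach buys is constructiveness: you exhibit an explicit seed and a threshold-respecting walk witnessing $u\in\gamma(y,j)$, which is more informative than the paper's existential conclusion. What the paper's approach buys is brevity and the elimination of the walk-splicing machinery and the auxiliary timing claim $T_{w_{i^\ast}}<T_u$.
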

\begin{proof}
The lemma follows because we are using the \emph{max-max} model.
First, suppose that \math{u\in\gamma(x,k)}.
Nodes receive information in time steps.
Let \math{I_{m}(u)} be the information value at node \math{u} after time step
\math{m} of
information propagation
using a single seed \math{(x,k)}.
Similarly let \math{I'_{m}(u)} be the information value at \math{u}
after time step
\math{m} of
information propagation
using all sources and their respective seeds. We claim that
\math{I'_m(u)\ge I_m(u)} for all \math{m\ge0} and all \math{u}. Indeed,
suppose to the contrary that this does not hold for some \math{m,u};
in which case, there is an earliest time step \math{m} for which it
does not hold, i.e.
\math{I'_m(u)<I_m(u)} and for all \math{\ell<m} and all \math{v},
\math{I'_\ell(v)\ge I_\ell(v)}. It means that \math{I_m(u)} is different
from \math{I_{m-1}(u)}. Since there is no querying, and since
we are using the \emph{max-max} model, it means that some node
\math{v} \emph{propagated} information to node \math{u}
in the previous time step that resulted in information value
\math{I_m(u)=\a(v,u)I_{m-1}(v)}. But by assumption,
\math{I'_{m-1}(v)\ge I_{m-1}(v)} and since we are using the \emph{max-max}
model, \math{I'_m(u)\ge \a(v,u)I_{m-1}'(v)\ge\a(v,u)I_{m-1}(v)
= I_m(u)}, a contradiction. Thus, if a node is converted in any set $\gamma(x,k)$, then it is also converted in the joint information propagation using all the seed sets simultaneously.

Let $I^{\max}_m(u)$ denote the maximum value of $I_m(u)$ over all singleton
(seed,source) pairs \math{(x,k)} where  $x\in \p_k$ and $k\in\{1,\ldots, K\}$. We now claim that
\math{I'_m(u)\le I^{\max}_m(u)}.
This means that if a node gets converted in the joint information propagation,
it gets converted in at least one of the singleton information propagations.
Indeed, suppose to the contrary that for some earliest \math{m},
\math{I'_m(u)>I^{\max}_m(u)}.
Again this means that in the joint propagation, \math{I'_m(u)} changed
at step \math{m}, and so information was propagated from some node
\math{v} to \math{u} with the result that
\math{I'_m(u)=\a(v,u)I'_{m-1}(v)} (\emph{max-max} model).
We also know that there is some singleton propagation with
\math{I_{m-1}(v)\ge
I'_{m-1}(v)}, by assumption, since \math{m} is the earliest time step
when this fails. In this singleton propagation, it must be that
\math{I_{m}(u)\ge \a(v,u) I_{m-1}(v)
\ge \a(v,u) I'_{m-1}(v)=I'_m(u)} (\emph{max-max} model),
a contradiction.
\end{proof}

An immediate consequence of Lemma~\ref{lem:Gamma} is that if all the source
values are the same, so
$I_k=I$ for \math{k\in\{1,\ldots,K\}}, and every node trusts all
the sources equally, so \math{\a(k,u)} is independent of
\math{k}, then the converted set is
\mand{
\bigcup_{k=1}^K\bigcup_{u\in\p_k}\gamma(u,1).
}
This converted set is exactly what it would be if there was just one
source with value \math{I} seeding \math{\p=\cup_{k=1}^K\p_k}.
\begin{lemma}\label{lem:identical}
If all sources are identical and every node trusts all sources equally, then
\math{K} sources with information value \math{I} seeding sets
\math{\p_1,\ldots,\p_K} results in the same converted set as one source
with information value \math{I} seeding the set
\math{\p=\cup_{k=1}^K\p_k}.
\end{lemma}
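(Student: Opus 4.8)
The plan is to obtain this statement as an immediate corollary of Lemma~\ref{lem:Gamma}, which has already done the substantive work; here we merely specialize its hypotheses.

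First I would pin down exactly how the singleton coverage \math{\gamma(u,k)} depends on the source index \math{k}. By definition, \math{\gamma(u,k)} is the set of Believers produced when the single seed \math{(u,k)} is injected and no other node is seeded. The only role played by \math{k} in this process is that it initializes the information-value set of \math{u} with the value \math{\a(u,k)\cdot I_k}; everything else driving the diffusion --- the edge trusts \math{\a(u_i,u_j)}, the (single) thresholds \math{t(u)}, the evacuation time \math{\tau}, and the \emph{max-max} valuation and fusion rules --- is fixed and does not depend on \math{k}. Hence, if two sources \math{k} and \math{k'} satisfy \math{\a(u,k)\,I_k=\a(u,k')\,I_{k'}} for every \math{u}, then \math{\gamma(\cdot,k)=\gamma(\cdot,k')} as set-valued functions.

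Next I would invoke the hypotheses. ``All sources identical'' gives \math{I_k=I} for every \math{k}, and ``every node trusts all sources equally'' gives that \math{\a(u,k)} is independent of \math{k}; so the common initial seed value at \math{u} is \math{\a(u)\cdot I}, and therefore \math{\gamma(u,k)=\gamma(u,1)} for all \math{k\in\{1,\ldots,K\}} and all \math{u\in V}. Applying Lemma~\ref{lem:Gamma} to the \math{K}-source instance with seed sets \math{\p_1,\ldots,\p_K} then yields the converted set
\mand{
\bigcup_{k=1}^K\bigcup_{x\in\p_k}\gamma(x,k)
\ =\ \bigcup_{k=1}^K\bigcup_{x\in\p_k}\gamma(x,1)
\ =\ \bigcup_{x\in\p}\gamma(x,1),
\qquad \p=\bigcup_{k=1}^K\p_k .
}
Applying Lemma~\ref{lem:Gamma} once more, now to the single-source instance (\math{K=1}, information value \math{I}, the common source-to-node trust profile, seed set \math{\p}), gives a converted set of exactly \math{\bigcup_{x\in\p}\gamma(x,1)}. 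The two converted sets coincide, which is the claim.

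I do not expect a real obstacle here. The only point that deserves a sentence of justification is that \math{\gamma(\cdot,k)} is genuinely \math{k}-independent under the stated hypotheses --- in particular that a node simultaneously seeded by several of the identical sources behaves exactly as a node seeded by one of them --- and this is immediate in the \emph{max-max} model, where a node's information value is the maximum over its source-value pairs. All the delicate reasoning (the induction comparing each singleton propagation with the joint propagation, in both directions) has already been carried out in the proof of Lemma~\ref{lem:Gamma}, so nothing further is needed.
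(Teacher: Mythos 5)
Your proof is correct and matches the paper's own treatment: the paper also derives Lemma~\ref{lem:identical} as an immediate consequence of Lemma~\ref{lem:Gamma}, observing that identical source values and $k$-independent trusts make $\gamma(u,k)=\gamma(u,1)$, so the union in Lemma~\ref{lem:Gamma} collapses to $\bigcup_{u\in\p}\gamma(u,1)$, which is exactly the converted set of a single source seeding $\p=\cup_{k=1}^K\p_k$. No gaps; your extra remark about multiply-seeded nodes in the \emph{max-max} model is a harmless clarification.
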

It is therefore immediate that finding the optimal seed sets for
\math{K} identical sources with budgets
\math{B_1,\ldots,B_K} is equivalent to finding the optimal single seed set
for just a single one of these sources with budget
\math{\sum_{k=1}^KB_k}.
Thus, the \math{K} identical sources problem reduces to the single source
problem.

A direct application of Lemma~\ref{lem:Gamma}
gives the coverage function.
\begin{lemma}\label{lem:single} 
\mand{
\g(\p_1,\ldots,\p_K) = \left|\bigcup_{k=1}^K\bigcup_{u\in \p_k} \gamma(u,k)\right|.
}
\end{lemma}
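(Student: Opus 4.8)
The plan is to read this off Lemma~\ref{lem:Gamma} essentially for free. First I would recall what $\g$ means: by definition the coverage function returns the number of nodes that end up in the Believer state for the given seed sets, and in general it returns the \emph{expected} number when the communication is randomized. In the simplified \emph{max-max} model, however, there is no randomization in the propagation, so the diffusion is deterministic and $\g(\p_1,\ldots,\p_K)$ is exactly the cardinality of the final set of Believer nodes — not an expectation. Second, I would invoke Lemma~\ref{lem:Gamma}, which already identifies that final Believer set explicitly as $\bigcup_{k=1}^K\bigcup_{u\in\p_k}\gamma(u,k)$, where $\gamma(u,k)$ is the singleton converted set defined earlier (under the \emph{max-max} dynamics with $\tau$ taken large enough that no node evacuates before propagation terminates, so that the sets being unioned are precisely the ones Lemma~\ref{lem:Gamma} refers to). Taking cardinalities of both descriptions of this one set yields $\g(\p_1,\ldots,\p_K)=\bigl|\bigcup_{k=1}^K\bigcup_{u\in\p_k}\gamma(u,k)\bigr|$, which is the claim.

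There is no real obstacle here: all the content — the two-sided sandwich $I_m(u)\le I'_m(u)\le I^{\max}_m(u)$ for every node $u$ and every step $m$, and hence the equality of the jointly converted set with the union of the singleton converted sets — was already carried out in the proof of Lemma~\ref{lem:Gamma}. The only thing to be careful about is the bookkeeping that $|\gamma(u,k)|$ as it appears in the statement is the size of the same converted set that Lemma~\ref{lem:Gamma} unions over, and that determinism of the \emph{max-max} model lets us drop the ``expected'' qualifier in the definition of $\g$; once those two points are in place, the cardinality step is purely formal. (If one wished to avoid quoting Lemma~\ref{lem:Gamma}, one could instead rerun its monotonicity/maximality argument inline, but routing through the already-proved lemma is the clean approach.)
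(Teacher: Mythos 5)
Your proposal matches the paper's own treatment: Lemma~\ref{lem:single} is stated there as a direct consequence of Lemma~\ref{lem:Gamma}, with the coverage function simply counting the nodes in the converted set that Lemma~\ref{lem:Gamma} identifies as $\bigcup_{k=1}^K\bigcup_{u\in\p_k}\gamma(u,k)$. Your added care about determinism of the \emph{max-max} model (so $\g$ is a count rather than an expectation) is a fine bookkeeping remark, but the argument is essentially identical to the paper's.
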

It follows from Lemma~\ref{lem:single} that \math{\g(\p_1,\ldots,\p_K)} is monotone.
It is also easy to see that functions of this form (that are the size of the union taken nodewise of a set function defined on a node) are submodular.
We therefore have the following theorem.
\begin{theorem}\label{theorem:greedy}
For the \emph{max-max} model,
there is a greedy
deterministic algorithm which computes seed sets
\math{\p_1,\ldots,\p_K} of sizes \math{B_1,\ldots,B_K} for which
\mand{
\g(\p_1,\ldots,\p_K)\ge{\textstyle\frac12}\cdot\g(\p'_1,\ldots,\p'_K)
\qquad
\forall \p'_k \text{ with } |\p'_k|\leq B_k.
}
Moreover, when the individual budget constraints \math{|\p_k|\le B_k} are replaced by a total budget constraint
\math{|\p_1|+\cdots+|\p_K|\le B},
then the deterministic greedy
 algorithm yields seed sets $\p_1,\ldots,\p_K$
of total size equal to the budget $B$ such that
\mand{
\g(\p_1,\ldots,\p_K)
\ge\left(1-{\textstyle\frac1e}\right)\g(\p_1',\ldots,\p_K')
\qquad
\forall \p_k' \text{ with } |\p_1'|+\cdots+|\p_K'|\leq B.
}
\end{theorem}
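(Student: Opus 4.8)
The plan is to observe that Lemma~\ref{lem:single} exhibits the coverage function as a \emph{single} monotone submodular set function, and then to invoke the two textbook greedy guarantees for monotone submodular maximization (cf.\ \cite{Kempe2003}): the \math{(1-1/e)} bound under a cardinality constraint and the \math{1/2} bound under a matroid constraint. Concretely, take the ground set \math{N=V\times\{1,\ldots,K\}} of (node, source) pairs and define \math{f(S)=\bigl|\bigcup_{(u,k)\in S}\gamma(u,k)\bigr|} for \math{S\subseteq N}. By Lemma~\ref{lem:single}, \math{\g(\p_1,\ldots,\p_K)=f(S)} with \math{S=\{(u,k):u\in\p_k\}}, and, as noted just above the theorem, \math{f} is monotone and submodular, being a coverage function. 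The greedy algorithm keeps a current set \math{S\subseteq N} (equivalently a family \math{(\p_1,\ldots,\p_K)}) and at each step adds the feasible pair \math{(u,k)} of largest marginal gain \math{f(S\cup\{(u,k)\})-f(S)}, ties broken by a fixed rule so that the algorithm is deterministic; ``feasible'' means \math{u\notin\p_k} together with \math{|\p_k|<B_k} in the first part and \math{|\p_1|+\cdots+|\p_K|<B} in the second. Each marginal gain is evaluated through \math{f}, hence through the singleton sets \math{\gamma(u,k)}, each obtained by simulating the deterministic \emph{max-max} single-seed diffusion (recall \math{\tau} is taken large enough that no node ever leaves), so the whole procedure is polynomial-time.

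For the total-budget claim, the feasible sets are exactly the independent sets of the uniform matroid of rank \math{B} on \math{N}, and monotonicity of \math{f} forces greedy to terminate with \math{|S|=B}. The \math{(1-1/e)} bound is then the Nemhauser--Wolsey--Fisher theorem, whose short proof I would reproduce: writing \math{G_i} for the greedy set after \math{i} steps and \math{O} for an optimal set with \math{|O|=B}, submodularity together with the greedy choice (every \math{o\in O} is a feasible addition to \math{G_i} while \math{i<B}) gives \math{f(O)\le f(G_i)+B\bigl(f(G_{i+1})-f(G_i)\bigr)}, so the gaps \math{\delta_i=f(O)-f(G_i)} obey \math{\delta_{i+1}\le(1-1/B)\,\delta_i}, whence \math{\delta_B\le(1-1/B)^B f(O)\le e^{-1}f(O)} and \math{f(G_B)\ge(1-1/e)f(O)}.

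For the individual-budget claim, the feasible sets are exactly the independent sets of the partition matroid on \math{N} with parts \math{N_k=V\times\{k\}} and capacities \math{B_k} (assume WLOG \math{B_k\le|V|}, capping larger budgets since re-seeding a node from the same source does nothing), and greedy again terminates at a basis. The \math{1/2} bound is the Fisher--Nemhauser--Wolsey theorem for a matroid constraint, and specializing the exchange argument to a partition matroid makes it clean. Let \math{g_1,\ldots,g_r} be the greedy picks in order, \math{G} the final greedy set, and \math{O} an optimal basis. Match, within each part \math{N_k}, the \math{B_k} elements of \math{O\cap N_k} to the \math{B_k} elements of \math{G\cap N_k} by an arbitrary bijection \math{\pi}. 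For \math{o\in O\setminus G}, the element \math{g_j=\pi(o)} lies in the same part \math{N_k} as \math{o} and was added at some step \math{j}, at which moment \math{|G_{j-1}\cap N_k|<B_k}, so \math{o} was itself feasible at step \math{j}; hence submodularity and the greedy choice give \math{f(G+o)-f(G)\le f(G_{j-1}\cup\{o\})-f(G_{j-1})\le f(G_j)-f(G_{j-1})}. Summing over \math{o\in O\setminus G} (distinct indices \math{j}) and combining with \math{f(O)\le f(O\cup G)\le f(G)+\sum_{o\in O\setminus G}\bigl(f(G+o)-f(G)\bigr)} yields \math{f(O)\le 2f(G)}.

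The step that needs genuine care is the reduction itself --- certifying that the \math{K}-tuple objective is a single monotone submodular function on \math{V\times\{1,\ldots,K\}}, which is exactly Lemma~\ref{lem:single} plus the observation that a union-of-images size is a coverage function --- together with, in the matroid case, checking that the partition-matroid matching really does witness ``\math{o} was available when greedy chose \math{\pi(o)}''; the remaining inequalities are routine. Capping the budgets at \math{|V|} (respectively \math{K|V|}) to keep the matroid picture honest is a minor bookkeeping point.
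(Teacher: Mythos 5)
Your proposal is correct and follows essentially the same route as the paper: reduce via Lemma~\ref{lem:single} to maximizing a single monotone submodular coverage function on (node, source) pairs, then apply the classical greedy guarantees of Fisher--Nemhauser--Wolsey ($1-1/e$ for the cardinality/total-budget constraint, $1/2$ for the partition-matroid/individual-budget constraint). The only difference is that the paper simply cites \cite{Fisher1978} for these bounds, whereas you reproduce their standard proofs, which is fine.
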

\begin{proof}
The theorem follows from the monotonicity and submodularity of
\math{\g} implied by Lemma~\ref{lem:single} and the greedy deterministic
algorithm for
maximizing a monotone submodular function (\cite{Fisher1978}). The case with a total budget simply asks to find a set of (node,source) pairs
of  size at most $B$ maximizing a submodular monotone set function;
for this case the greedy algorithm gives a $1-\frac1e$ approximation. The case with multiple sources asks to maximize a submodular monotone set function with respect to partition matroid constraints; for this, the greedy algorithm from \cite{Fisher1978} gives a $\frac12$-approximation.
\end{proof}

\paragraph{Remark.}
When \math{K=1} (single source), the total budget constraint and the
individual budget constraints are equivalent and we get a
\math{(1-\frac1e)}-approximation.

\paragraph{Remark.}
Randomized algorithms guaranteeing a $(1-\frac1e)$-approximation exist for the case with multiple sources as well as for a single
source\cite{Calinescu2011}. We choose to focus on the greedy algorithms due to their simplicity and relative efficiency, as our goal is to find good seed sets for extremely large networks.

The greedy algorithm implied by Theorem~\ref{theorem:greedy} is an intuitive
algorithm that at each greedy step selects the node to add to the seed set
that gives the largest increase in the size of the converted set.
This algorithm
is summarized in Algorithm~\ref{alg:greedy} for the case with
individual budget constraints. When there is a total budget
constraint, the algorithm is exactly the same except that the
{\bf while} condition in step 4 checks that the total budget constraint is
not exceeded.
Our projected greedy heuristic for the general model is based on this
greedy algorithm. Implementation of this algorithm is discussed in Section~\ref{sec:greedy}.


\begin{algorithm}[t]\label{algo:greedy}
\begin{center}
\fbox
{
\parbox{5in}{
\begin{algorithmic}[1]
\STATE {\bf Input:} Instance of the \emph{max-max} model with $K$ sources
and budgets \math{B_1,\ldots,B_K}.
\STATE	Initiate converted nodes $C = \emptyset$ and selected nodes
\math{\p_k=\emptyset}.
\STATE Compute \math{C_{u,k}=\gamma(u,k)} for all nodes
\math{u\in V} and sources \math{k=1,\ldots,K};
\WHILE{$|\p_k| < B_k$ for any $k$ and $|C| < |V|$}
\STATE	Choose $(u^*,k^*)$ with $|\p_{k^*}|<B_{k^*}$ that maximizes
$|C_{u,k}\setminus C|$ over \math{(u,k)}\;
\STATE	Update $\p_k\gets \p_k\cup u^*$ and $ C = C\cup C_{u^*,k^*}$\;
\ENDWHILE
\STATE	Output sets $\p_1,\ldots,\p_K$.
\end{algorithmic}
}
}
\end{center}
\caption{Greedy algorithm on instance of simplified model\label{alg:greedy}}
\bigskip

\end{algorithm}


\remove{
Also, the trust values between sources and nodes are identical. Therefore,
$I_{1}=I_{2}=\hdots,=I_{K} = I$ and for all $u,v \in V$ and $s$ in sources,
$\a(v,s) = \a(u,s)$.

If there are multiple sources $A =\{I_{K},\hdots,I_{K}\}$, then consider an alternate source $s'$ with seed set of size $\sum_{i=1}^{K}S_{i}$ and information value $I$. Its trust value is same as the original sources. Given any seeding strategy $\p = \{\p_{1},\hdots,\p_{K}\}$ for the original set of sources, a new seeding strategy can be constructed for the the alternate source as $\p' = \bigcup_{i=1}^{K}\p_{i}$.

\begin{proposition}\label{prop:simplified}
$u \in V$ will be converted to Believed state by source $s'$ and seeding $\p'$ if and only if $u$ is also converted by seeding $\p$ with sources $A$.
\end{proposition}
\begin{proof}
Consider a node $v$ that is converted to Believed state in source $s'$ and seeding $
\p'$. Since at every step of propagation of information, only the maximum value is chosen, the information value at $v$ can be traced back to a node $u \in \bigcup_{i=1}^{K}\p_{i}$. Let the path from $u$ to $v$ along which the information travelled be $(u,w_{1},w_{2},\hdots,w_{k},v)$. Since there is no Undecided or query state in the simplified model, node $w_{k}$ would have had to transmit information to node $v$, i.e., it had be in the Believed state. And since $w_{k}$ transmitted its highest information value to $v$, an information value of $I(v)/\a(w_{k},v)$ was enough to convert $w_{k}$ to Believed state. Applying this argument serially  in the reverse order to all nodes in path $(u,w_{1},w_{2},\hdots,w_{k},v)$, we can say that every node would have been converted to Believed state by information received from its predecessor in the path.

Hence, $v$ is converted to the Believed state as $u$ is part of the seeding strategy, irrespective of the other seeds. But $u$ is also part of the seeding strategy $\p$ with sources $A$. Moreover, the information value and trust is also the same for sources $S$ and $s'$.

This proves the statement of proposition.
\end{proof}


Therefore, we only consider $1$ source for the following analysis of the simplified model.
We show that the coverage function $\g$ is submodular and monotone for the simplified model with the help of Lemma~\ref{lem:union}.

\begin{lemma}\label{lem:union}
Consider a function $\bar\g : 2^{V} \rightarrow 2^{V}$ that maps the seed nodes to nodes that are consequently converted to Believed state. For any set $S\subseteq V$,
\begin{equation*}
\bar\g(S) = \bigcup_{x\in S} \bar\g(\{x\})
\end{equation*}
\end{lemma}

\begin{proof}
We will prove the statement by induction. For the base case of $S= \emptyset$ the statement holds trivially. Now let $\bar\g(S)= \bigcup_{u\in S} \bar\g(\{u\})$ for some $S\subseteq V$ and let $v\in V$ be any node such that $v\notin S$.

Consider a node $x \in \bar\g(S\cup v)$. The information value at node $x$ after the diffusion process ends is $x(I)$. Since at every step of propagation of information, only the maximum value is chosen, the information value $x(I)$ can be traced back a to a node $u \in S\cup v$. Let the path from $u$ to $x$ along which the information travelled be $(u,w_{1},w_{2},\hdots,w_{k},x)$. Since there is no Undecided or query state in the simplified model, node $w_{k}$ would have had to transmit information to node $x$, i.e., it had be in the Believed state. And since $w_{k}$ transmitted its highest information value to $x$, an information value of $x(I)/\a(w_{k},x)$ was enough to convert $w_{k}$ to Believed state. Applying this argument serially  in the reverse order to all nodes in path $(u,w_{1},w_{2},\hdots,w_{k},x)$, we can say that every node would have been converted to Believed state by information received from its predecessor in the path. This means $x \in \bar\g(u)$, that is to say, either $x\in \bar\g(S)$ or $x \in \bar\g(v)$.

Hence $\bar\g(S\cup v) = \bar\g(S) \cup \bar\g(v)$ and so the lemma is proved by induction.
\end{proof}

Lemma~\ref{lem:union} implies that the coverage function is of the following form $$\g(S) = \left|\bigcup_{x\in S} \bar\g(\{x\})\right|.$$
First of all, this shows that the function if monotone. Also functions of this form are known to be submodular (see \cite{NW} for example).

We can additionally use the following theorem to show that a simple greedy algorithm gives good approximation to the problem of maximizing the value of the coverage function.

\begin{theorem}
There is a deterministic algorithm which achieves a $(e-1)/e$-approximation
for the problem of maximizing a monotone submodular function. (\cite{NWF}).
\end{theorem}

If $n$ nodes are to be selected for seeding then the greedy algorithm is as follows:

\begin{algorithm}[H]\label{algo:greedy}
	\SetAlgoSkip{medskip}
	\SetAlgoInsideSkip{medskip}
	- Initiate converted nodes $C = \emptyset$ and selected nodes $S=\emptyset$\;
	\While{$|S| < n$ and $|C| < |V|$}
	{
		Calculate $\bar\g(v)$ for each $v \in V\backslash C$\;
		Choose $v$ that maximizes $|\bar\g(v)\backslash (\bar\g(v)\cap C)|$\;
		Set $S = S\cup v$ and $ C = C\cup \bar\g(v)$\;
	}
	- Output set $S$.
	\caption{Greedy algorithm on instance of simplified model}
\end{algorithm}

Implementation of this algorithm is discussed in the section~\ref{sec:greedy}.
}

\section{The Projected Greedy Heuristic}\label{sec:greedy}

We now describe the Projected Greedy heuristic which takes as input
an instance
$\G$ of the general diffusion model (see Figure~\ref{fig:instance})
with \math{K} sources
 and produces as output a seed set $\{\p_{1},\p_{2},\hdots,\p_{K}\}$.
Given the seed sets $\{\p_{1},\p_{2},\hdots,\p_{K}\}$, it is possible to compute the coverage set
by simulating the information set updates in the network; simulating a single time-step takes \math{O(|E|)}.
We will denote the time it takes to compute the coverage set in the general model by $T_\G=O(a|E|)$, where $a$ is
the number of timesteps that it takes for the diffusion process to converge. While theoretically $a$ can be large
for the general model of diffusion we consider, practically we have observed in our experiments that 30-50 timesteps
is sufficient. Thus typically $T_\G$ is on the order of the number of links in the network, which for sparse social
network graphs is \math{O(|V|)}. This is useful, because all our algorithms need to be able to evaluate a seeding
in order to improve it. First, for comparison, we describe two natural algorithms because our projected
greedy algorithm has aspects of both.

\subsection{Brute Force}

The brute force algorithm is extremely simple:
try every possible distinct seeding, compute
\math{\g} for each seeding
 and select the seeding which maximizes the coverage. The running
time of this naive brute force approach is \math{O\left(T_\G\prod_{k=1}^K\choose{|V|}{B_k}\right)} because
there are \math{\prod_{k=1}^K\choose{|V|}{B_k}} possible seed sets.
Clearly, this algorithm
is not feasible, even for \math{B_k=1} (there are just too many possible
seedings to test). However, if we could somehow intelligently prune this
collection of seedings to a small number, then it would become
viable. This is one aspect of the projected greedy approach:
obtain plausible candidate seedings and pick one of them using the
brute force approach.

The way we will obtain these plausible candidate screenings is using a greedy
approach. The natural greedy algorithm is what we describe next, which
we call \emph{Actual Greedy}. Actual greedy deterministically
produces a single seed set.

\subsection{The Actual Greedy Approach}

The natural greedy seeding strategy is also simple.
For every pair \math{(u,k)} where \math{k} is a source and \math{u\in V} a
node, we consider adding it to the seed set (as long as we do not
violate the budget constraints \math{|\p_k|\le B_k}).
We add the pair \math{(u^*,k^*)} which results in the largest increase in the
coverage \math{\g(\p_1,\ldots,\p_K)}.
Starting with an
empty seeding, in this way we build up the seeding to the final
seeding one pair at a time.
We call this strategy {\em Actual Greedy}. While \emph{Actual Greedy} is
a plausible heuristic, because a general instance of the diffusion model is
non-monotone and non-submodular, there is no performance guarantee
as compared with using the greedy algorithm for the simplified model
(Section~\ref{section:simplified}).

\paragraph{Running Time of \emph{Actual Greedy}.}
Each step of \emph{Actual Greedy} adds a node to the seeding.
If the total budget is
\math{B=\sum_{k=1}^KB_k}, then there are \math{B} steps in the algorithm.
At each step, we need to consider \math{O(K|V|)} pairs to determine
which one is the best to add; and the time to test the seeding with that
pair added is
\math{T_\G}. So the total running time is
\math{O(B\cdot K\cdot|V|\cdot T_\G)}. For sparse graphs,
\math{|E|=O(|V|)}, and when the total size of the seed set is
a fraction of the graph (\math{B=\Theta(|V|)}), this means that the running time is typically
\math{O(K|V|^3)}.

The general diffusion model for arbitrary settings of the parameters does not
have any nice properties that can be exploited to improve this
running time, and so for large graphs, with millions of nodes, this
cubic running time is practically infeasible.

\subsection{Projected Greedy}

Given a general instance
of the diffusion model \math{\G}, the idea behind the
\emph{Projected Greedy} algorithm is to construct an instance of the
simplified model \math{\S} (see Section~\ref{section:simplified})
that closely approximates \math{\G}.
In a matter of speaking, we are ``projecting'' \math{\G} down
to the simpler instance \math{\S} that most ``closely'' approximates
\math{\G}.
For the simpler instance \math{\S}, we can leverage
Theorem~\ref{theorem:greedy} and use the greedy algorithm to
obtain a constant factor approximation to the optimal seeding
in \math{\S}. If the instance \math{\S} is a close approximation to
the instance
\math{\G} then the near optimal seeding for
\math{\S} will also be a good seeding for~\math{\G}.

There are two advantages of \emph{Projected Greedy} over
\emph{Actual Greedy}: (1) Since the simplified
model is monotone and submodular, we can use a more efficient algorithm than
running the \math{O(K|V|^3)} greedy algorithm on the general instance
\math{\G}; (2) Neither \emph{Actual Greedy} nor \emph{Projected Greedy} give
any performance guarantee for the quality of the seeding. However, we do
have some flexibility in the choice of the simplified instance
\math{\S}. So, by exploring a variety of plausible
simplified instances, we can generate several seedings (say \math{c} of
them), and we can choose one of these \math{c} as in the Brute Force
approach, via simulation. One can view this as a more intelligent sampling
of the possible seedings to test in the Brute Force approach, where the
choice of seedings is guided by the various choices for the simplified
instances \math{\S}.

Although we cannot give a guarantee on
the quality of the seeding produced by our heuristic, our experimental
evaluation demonstrates that \emph{Projected Greedy} performs significantly
better than widely used simple seeding strategies.

\subsection{Creating an Instance of the Simplified Model}

Given an instance \math{\G} of the general diffusion model
(see Figure~\ref{fig:instance}), we construct an instance
\math{\S}
of the
simplified model by setting \math{\lambda_d=\lambda_s=0}: an instance
of the \emph{max-max} model, and we set the evacuation
time \math{\tau} to be sufficiently large so that there is no evacuation.
We set the upper and lower thresholds to the same value,
\math{t_l(u)=t_h(u)=t(u)} so that there is no Undecided state.
We do not change any of the trust values along any of the edges, or between
sources and nodes.

When \math{\lambda_s>0} or \math{\lambda_d>0} the diffusion will generally
be faster because taking the sum will tend to
inflate information values at the nodes. To compensate for this,
we need to raise the thresholds, to get a better approximation to \math{\G}.
We treat \math{t(u)} as tunable parameters in the simplified model, and
we may alter the values to get different instances of the
simplified model. We will discuss how to choose these thresholds shortly.
An instance of the simplified model is summarized
in Figure~\ref{fig:SimpInstance}.
\begin{figure}
\begin{center}
\fbox{
\parbox{4.55in}{
{\bf Instance \math{\S}}
\\
Graph \math{G=(V,E)} specifying the network for the diffusion.
\\
Source information values and budgets,
\math{(I_1,B_1),\ldots,(I_K,B_K)}
\\
Trust values \math{\a(u_i,u_j)} for all edges \math{(u_i,u_j)\in E}.
\\
Trust values \math{\a(k,u_j)} between each source \math{k} and each node
\math{u_j\in V}.
\\
Diffusion parameters \math{\lambda_d=\lambda_s=0}, \math{\tau\rightarrow
\infty}
\\
Lower and upper thresholds
\math{t_l(u)=t_h(u)=t(u)} for each node \math{u\in V}.
\\
{\bf Desired output:} seed sets \math{\p_1,\ldots,\p_K}.
}}
\end{center}
\caption{The simplified instance \math{\S} for general instance
\math{\G}.\label{fig:SimpInstance}}
\end{figure}
We distinguish quantities in the simplified model using a subscript
\math{\S} from quantities in the general model with a subscript \math{\G}; for
example,
$\g_{\G}(\p_1,\ldots,\p_K)$ is the coverage function for a seeding
in the general instance, and
$\g_{\S}(\p_1,\ldots,\p_K)$
is the coverage function for the same seeding in the
simplified instance.

\paragraph{Further Simplifying the Model}
We can further simplify the model by insisting on a single source with
information value \math{I} and budget equal to the total budget of
the \math{K} sources, \math{B_\S=B_1+\cdots+B_K}. The
information value \math{I}
is chosen as the
weighted information value of the \math{K} sources, where the weights are
the number of seeds allocated to each source:
$$ I_\S = \frac{1}{\sum\limits_{k = 1}^{K} B_{k}}\cdot
\sum_{k = 1}^{K}I_{k}\cdot B_{k}.$$
With \math{K} sources, every node \math{u} has a trust weight
\math{\a(k,u)} for each source. These are combined into a single
average trust weight with a single source:
\mand{
\a_\S(1,u)=\frac{1}{K}\sum_{i = k}^{K}\a_\G(k,u).
}
All trust values between two nodes in the graph are unchanged.
We will discuss next how to select the thresholds
\math{t_\S(u)}. By considering different thresholds at the nodes, we
are able to generate a variety of seedings, each of which is near optimal
for a slightly different instance of the simplified model.

In this further simplified model, a single source will seed
\math{B_\S} nodes giving a seeding
\math{\p_\S}. To convert it into a seeding for the general instance
\math{\G} we need to assign each seeded node to one of the
\math{K} sources in \math{\G}, so \math{\p_\S\rightarrow
\p_1,\ldots,\p_K}. In principle, one could try to optimize this
assignment. For simplicity, we just pick an arbitrary assignment
(for example a random assignment) of the seeded vertices to the
\math{K} sources that respects the constraint \math{|\p_k|\le B_k}.

\subsection{Using Thresholds \math{t_\S(u)} to Generate Different Seedings}

For different choices of
\math{t_\S(u)} with \math{u\in V}, we get different instances of the
simplified model:
\mand{
\{t_\S(u)\}\quad
{\buildrel\text{\footnotesize greedy}\over \longrightarrow}
\quad \p_\S\quad
{\buildrel\text{\footnotesize partition}\over \longrightarrow}
\quad \p_1,\ldots,\p_K\quad
{\buildrel\text{\footnotesize evaluate}\over \longrightarrow}
\quad
\g_\G(\p_1,\ldots,\p_K).
}
By evaluating according to \math{\g_\G} in this way, we can use different
settings of the thresholds in the simplified model to explore different
candidate seedings for the general instance \math{\G} in a more intelligent
way than the pure Brute Force approach.

There are several ways to choose the thresholds in the simplified model
to get different seedings for the general instance. For simplicity and
computational efficiency, we assume homogeneous thresholds, so every node
has the same upper and lower threshold, and so \math{t_\S(u)} becomes
just \math{t_\S}. We choose \math{t_\S} from a set of
thresholds
$\Omega = \{t_{1},t_{2},\hdots,t_{c}\}$
where $0\leq t_{i} \leq 1$ to generate \math{c} simplified instances;
this in turn will produce \math{c} seedings to be evaluated with
\math{\g_\G}. We choose \math{\Omega} and \math{c} as follows.

\paragraph{Homogeneous Approximation to Trust in Instance \math{\G}.}
To construct \math{\Omega}, we imagine
all trust weights in instance \math{\G} are
\math{\a_{avg}}, the average trust weight in the network for instance
\math{\G} (this is just for the intuition on how we generate \math{\Omega}).
Let $t_{min}^\G$ be the minimum lower threshold over all nodes in the
general instance
\math{\G}; similarly $t_{max}^\G$ is the maximum upper threshold.
All the thresholds in \math{\Omega} will satisfy
\math{t_{min}\le t\le t_{max}}.
The highest threshold we need to consider is \math{\a_{avg}\cdot I}
because no information value at a source can be higher than $I$, so if the
threshold is higher, no node can convert to Believer. Every next
hop reduces the information value by a factor of
\math{\a_{avg}} so all thresholds which are in the interval
\math{[\a_{avg}^2\cdot I\ ,\ \a_{avg}\cdot I)}
are equivalent, and we only need consider
one of them, \math{\a_{avg}^2\cdot I}. We can continue this logic which
implies that we only need to consider thresholds (in the simplified model) of
the form
\math{t_i=\a_{avg}^i\cdot I}.
Further restricting the thresholds to  be in the range
\math{[t_{min},t_{max}]}
results in our choice of the threshold set \math{\Omega}:
\mand{
\Omega=
\left\{t_i\ \left|\ t_i=\a_{avg}^i\cdot I;\  t_{min}\le t_i\le
t_{max}\right.\right\}
\cup
\{t_{min},t_{max}\}.
}
The number of thresholds \math{c\approx 2+O(\log t_{min}/\log\a_{avg})}.

\paragraph{Two-Level Approximation to Trust in Instance \math{\G}.}
A better approximation to the trust values in \math{\G} can be obtained
by clustering the trust weights into high values \math{\a_{high}} and
low values \math{\a_{low}}
This leads to a larger set of thresholds in \math{\Omega} for the
simplified model. The general idea is the same. We choose thresholds
as the possible information values at the nodes.
This set of information values is
\mand{
\{I;\ \a_{high}\cdot I,\ \a_{low}\cdot I;\
\a_{high}^2\cdot I,\ \a_{high}\a_{low}\cdot I,\ \a_{low}^2\cdot I;\
\ldots\}
}
In general, we can construct the set \math{\Omega} iteratively as follows:
\begin{center}
\begin{algorithmic}[1]
\STATE\math{\Omega\gets\{I\}}.
\WHILE{\math{\a_{high}\cdot\max_{t\in\Omega} t\ge t_{low}}}
\STATE \math{\Omega_{add}\gets\emptyset}.
\FOR{every \math{t\in\Omega} with \math{\a_{high}\cdot t\ge t_{min}}}
\STATE add \math{t} to \math{\Omega_{add}}.
\ENDFOR
\FOR{every \math{t\in\Omega} with \math{\a_{low}\cdot t\ge t_{min}}}
\STATE add \math{t} to \math{\Omega_{add}}.
\ENDFOR
\STATE\math{\Omega\gets\Omega\cup\Omega_{add}}
\ENDWHILE
\end{algorithmic}
\end{center}
This algorithm easily generalizes to approximating the trust in the instance
\math{\G} by more than two trust levels.
The size \math{c}
of the resulting set of thresholds \math{\Omega} depends on
how quickly information decays in the network; faster information decay
leads to smaller \math{c}.

\subsection{Running Time of Projected Greedy}
\label{sec:rumtime}
The basic methodology of \emph{Projected Greedy}
includes constructing \math{c} instances of the simplified model and
computing the near-optimal seeds in the simplified instance
\math{\S}, converting that solution to a seeding for the general instance
\math{\G} and finally evaluating \math{\g_\G} for the seeding.
The running time is therefore
\mand{
c\cdot O\left(T_{greedy}+T_\G\right),
}
where \math{T_{greedy}} is the time to construct the solution for the
simplified instance using the greedy algorithm.
Our implementation of the greedy algorithm for the simplified instance
exploits
the monotone and submodular nature of
\math{\g_\S}.
We create a special data structure $\delta$
where $\delta(u)\subseteq V$ is the set of nodes that convert
\math{u} when they alone are added to the seed set.
While calculating $\delta$, we also create array $N$
where $N(u)$ is the number of nodes that $u$ (if added to the seed set)
converts to the Believed state.
For sparse graphs, building $\delta$ and $N$ takes $O(n^{2}\log n)$ time, where $n=|V|$.
After selecting the node that gives the best improvement in $\g_{\S}$,
we only need to update the array $N$ using $\delta$.
This means we do not need to recalculate $N$ at every step,
 and the update takes only $O(n)$ time.
The process of updating array $N$ depends on properties of $\g_{\S}$,
such as monotonicity. Thus, for sparse graphs,
\mand{
T_{greedy}=O(n^2\log n+Bn).
}
When \math{B=\Theta(n)}, on sparse graphs,
the total running time is typically
\math{c\cdot O(n^2\log n)}.
To give an idea about how large the value of $c$ is, consider
an extreme case with homogenous trust values where $t_{min} = 0.01$,
$t_{max}= 0.99$ and $\a_{avg} = 0.9$.
In this extreme scenario $c = 44$,
which gives a running time that is
 orders of magnitude  better than \emph{Actual Greedy} when
\math{n} is in the millions.

Pre-calculating the data structure $\delta$ can have
worst case space complexity of $O(n^{2})$.
To avoid this, we use a hybrid strategy for computing seed sets in
\emph{Projected Greedy}.
In this strategy $\delta$ is not stored initially and the 
array $N$ is recalculated each time the best node is to be selected.
We use a technique in \cite{Krause2008} according to which array $N$ is stored
as a priority queue and each selection requires recalculation of $N(u)$
for only a small number of nodes $u$.
Only after a certain threshold is breached do we switch to populating $\delta$
for all nodes.
Since nodes that convert the highest number of other nodes to Believed state have already been selected at this stage, it helps in reducing the size of $\delta(u)$ for each of the remaining nodes.
Not only does this help reduce memory usage but it also does not affect the running time adversely.
Ultimately, \emph{Projected Greedy}  is more efficient by a factor of
about \math{n} compared to
\emph{Actual Greedy}.
In large graphs, this is a significant improvement and can sometimes be the difference between feasibility and intractability.

\section{Experiment Design}
\label{sec:design}

In order to compare the Greedy heuristic with other seeding strategies, we simulate the spread of evacuation warnings in a social network. The simulation of diffusion is carried out on different types of networks with several parameter values.

\subsection{Networks}
We used three different network structures, each with $100,000$ nodes with average degree approximately
 $4$. Frequently, in real world scenarios individuals form groups based on race, ethnicity, nationality, etc. Not only are individuals within the same social group more likely to be acquainted with each other, they are also inclined to place more trust with people in the same group as theirs. Since these connections and trust disparities play an important role in the flow of information, we model social groups by dividing the population into $2$ groups in the
following networks.

\paragraph{Scale-free Graphs:} We use the Albert-Barabasi model for generating random scale-free networks using the preferential attachment model \cite{Barabasi1999a,Albert2002}. The graph thus produced follows a 
power law distribution with exponent approximately $-2.9$, that is
 $P(x)\propto  x^{-2.9}$ (\math{P(x)} is the fraction of nodes having
degree \math{x}).
 Once the graph is created, $50,000$ nodes are randomly assigned to one group and the rest to the other.
Scale-free graphs do not take into consideration the fact that nodes within a group are more likely to communicate with each other.

\paragraph{Random Group Model:} Nodes are randomly assigned into $2$
groups of size
$50,000$ nodes. The probability that $2$ nodes are connected (edge probability) depends on whether they belong to the same group or not. If two nodes belong to the same group then the edge probability is $p_{s}$ while if they belong to different groups, the edge probability is $p_{d}$. Here $p_{s} = 2 * p_{d}$
(more connections within a group as between groups).
The probability \math{p_d} is chosen
so that the average degree is $4$.

\paragraph{San Diego Network:} This is a random geometric graph that is constructed from actual demographic data in the San Diego area \cite{Hui2010,Hui2011}. Since there is a large population of Hispanics, we consider two groups:
Hispanics and Non-hispanics. The population of nodes belonging to each group is based on the racial demographics of the region. Also, the edge probability between two nodes depends not only on the group they belong to, but also the physical distance between them. For example the edge probability between two nodes belonging to the same group is larger if they live close to each other than if they live far apart. Again the edge probabilities are chosen to have average
degree~$4$. The details of this model are given in
\cite{Hui2010,Hui2011}.

\subsection{Node Characteristics}
Since evacuation is a high cost action,
nodes would not evacuate without a significant amount of information.
This can be modeled with high upper thresholds $t_{h}(u)$.
Also, with such a high risk situation as evacuation,
individuals may be proactive. That is to say,
they may be willing to put more effort in collecting information.
This can be modeled by decreasing the lower thresholds $t_{l}(u)$.

In our experiments, we simulate $3$ different threshold value pairs for all nodes: $(t_{l}=0.2,t_{h}= 0.3)$, $(t_{l}=0.15,t_{h}= 0.55)$,
and $(t_{l}=0.4,t_{h}= 0.5)$.
We set the evacuation time to
 $\tau = 5$ time-steps (so, all nodes leave the network
5 time steps  after they are converted to their Believed state).

\subsection{Trust Scenarios}
Since nodes are split into $2$ groups, there are $2$ kinds of edges in the graph. The first type of edge is incident on nodes from the same group
(denoted type $A$ edges) . The second type of edge is incident on nodes from different groups  (denoted type $B$ edges).
Based on the trust values on these edges we have two trust scenarios.
In each scenario, we set the average trust on the edges to be $\a$.

\paragraph{Homogenous trust:} All edges have the same trust value. This models situations when no social groups exist.
The trust value on every edge is $\a$.

\paragraph{Group Variable trust:} The trust value on type $A$ edges is
$\a + \varepsilon$ where $\varepsilon > 0$. The trust value on type
$B$ edges is chosen so that the average trust is $\a$.
So the trust on type $B$ edges will be less than type
\math{A} edges, which models social groups that are more trusting of their
own group than outsiders.
We used $\a = 0.7$ and $\varepsilon = 0.05$ for our simulations.

\subsection{Seeding Algorithms}
We have $5$ trustworthy sources each with information value $I  = 0.95$
and trust value $\a(k,u) = 0.9$ for all $u \in V$. We look at scenarios in which between $5\% - 50\%$ of nodes are seeded in total, with
each source seeding an equal number of nodes.
We compare the following algorithms for generating the seeding
of total size $B$.

\paragraph{Random:} Randomly select \math{B} nodes
and arbitrarily assign these nodes to the \math{K} sources.

\paragraph{High Degree:} Select the $B$ highest degree nodes and arbitrarily
assign them to the \math{K} sources. Here, the 
degree for node $u$ is the
total outgoing trust weight:
$$degree(u)= \sum_{(u,u')\in E} \a(u,u')$$

\paragraph{Projected Greedy heuristic:} Seeds are generated according to the Projected Greedy heuristic described in Section $\ref{sec:greedy}$.

\subsection{Parameters}
For all our experiments we use $\lambda_{s} = 0$. This means nodes always chose the maximum value while combining information from the same source. We ran
 simulations for different choices of
$\lambda_{d}\in\{ 0.0, 0.05, 0.1, 0.2\}$.

Lastly, in real life scenarios, communication between nodes
 is not likely to succeed every time.
Hence, when a node that is not a source, queries or propagates an
information set to another node, it will succeed with some
probability \math{p}; we set  $p= 0.75$ in our experiments.

\section{Results and Discussion}
\label{sec:disc}

We ran each simulation for $50$ steps and repeat it $100$ times. We
 observed that $50$ steps are enough for the diffusion process to conclude.
Since the networks are generated randomly, we repeat the whole simulation on at least $10$ instances of the graph.
We use the 
average number of nodes evacuated as our measure of performance.
The standard deviation (error bar) due to the randomness
is extremely small; in fact the error bars are not even visible in our
plots.

\subsection{Seed Size}

\begin{figure}[t]
\begin{center}
\includegraphics[width = 6.5in]{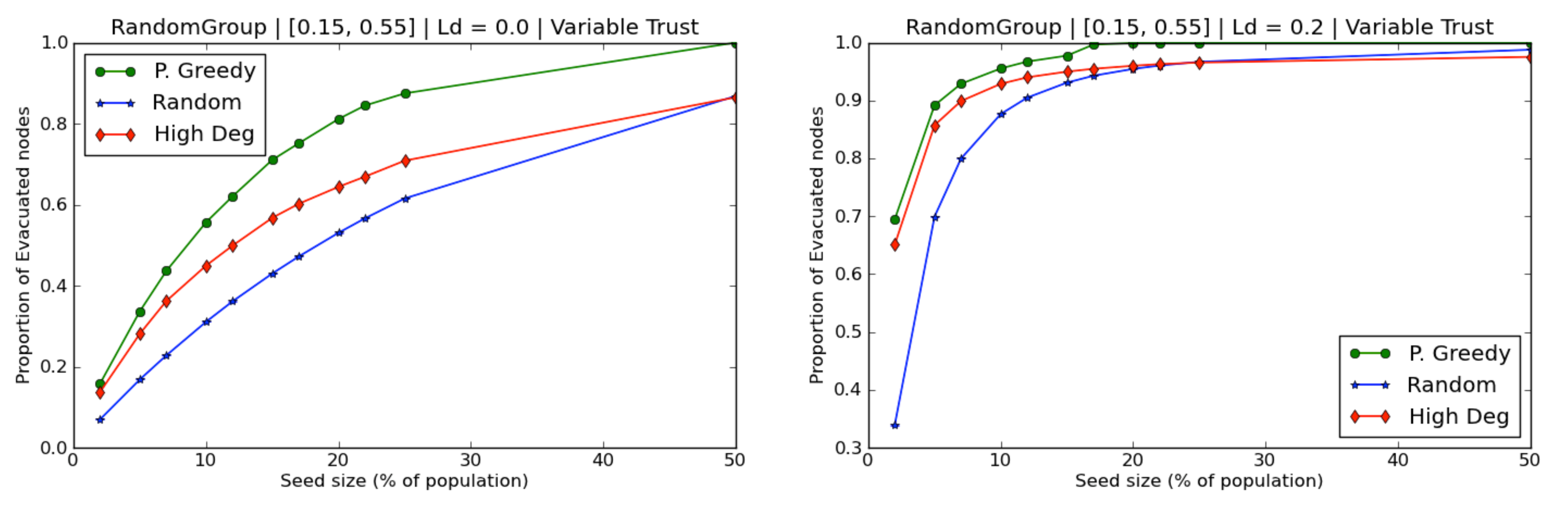}
\caption{\small
Random Group network with group variable trust (high trust
edges within a group). Thresholds are $t_l=0.15,\ t_h=0.55$; We
show
 $\lambda_{d}= 0.0$ (left) versus $\lambda_{d}=0.2$ (right) for
various total seed budget. Note that for plot on the right, the $y$-axis range is $[0.0, 1.0]$ while for the right plot it is $[0.3, 1.0]$. The Projected Greedy dominates the other algorithms.}
\label{fig:randomgroup}
\end{center}
\end{figure}
We start with the performance of the three seeding strategies (random, high-degree and Projected Greedy) as the total seeding budget increases.
Figure~\ref{fig:randomgroup} shows the random group model and
Figure~\ref{fig:sandiego} the San Diego network.
Though we ran several different scenarios, we pick these two as both relevant and
representative of the typical nature of the results.

For both the random graph model and the San Diego network,
the Projected Greedy heuristic performs consistently better
than both Random and High Degree seeding strategies.
When $\lambda_{d} = 0.0$, Projected Greedy and High Degree
are comparable for a small seed budget, but
as the seeding budget increases, so does the gap between their performance.
This is likely because there are two competing effects:
one should seed influential nodes, and one should spread out the seeds (i.e., not have the seed nodes be too close to each other, so that they will convert more nodes in total).
When you have a few seeds, influence is more important, and Projected Greedy
and high-degree are achieving that goal comparably. When you have many seeds, it now becomes more important to spread the seeds out, and high-degree
ignores how spread out the seeds are, whereas
Projected Greedy takes that into account.
This also explains why, eventually, even random seeding becomes better than
high degree: with more seeds, the need to spread out
the seeds wins. This fact is even more pronounced when
\math{\lambda_d=0.2} which makes sense because many spread out seeds
can have even more impact due to the summing effect present when
\math{\lambda_d>0}.
\begin{figure}[t]
\begin{center}
\includegraphics[width = 6.5in]{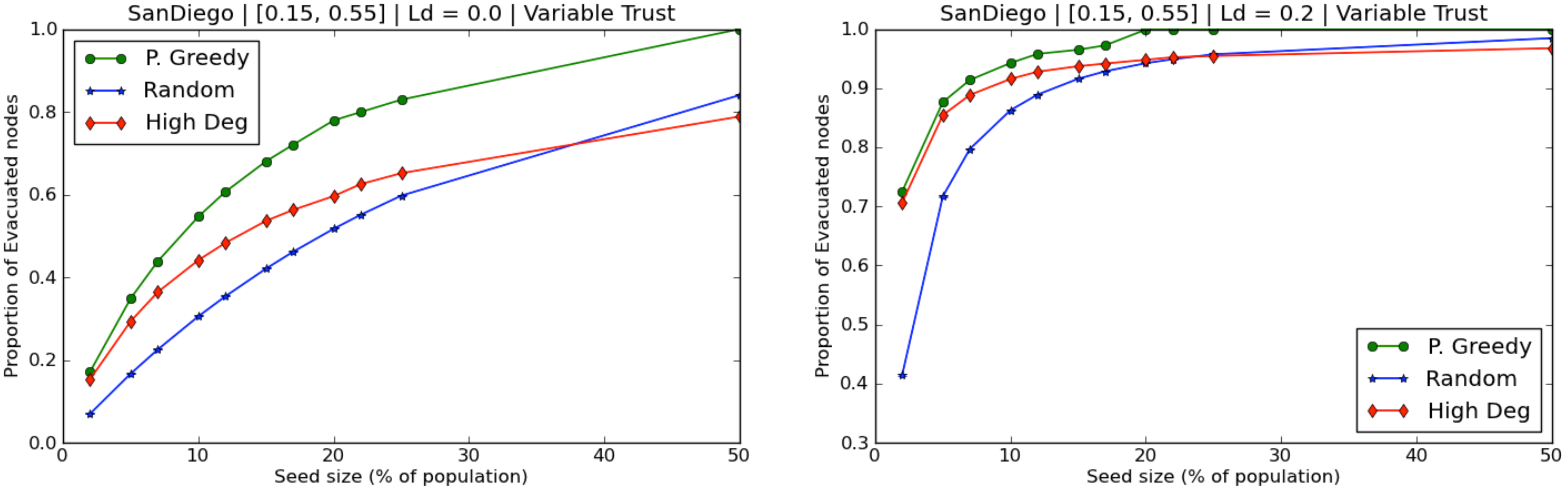}
\caption{\small San Diego network, with group variable trust (high trust
edges within a group). Thresholds are $t_l=0.15,\ t_h=0.55$; We
show
 $\lambda_{d}= 0.0$ (left) versus $\lambda_{d}=0.2$ (right) for
various total seed budget.
Note that for plot on the right, the $y$-axis range is $[0.0, 1.0]$
while for the right plot it is $[0.3, 1.0]$. Projected Greedy clearly dominates
the other seeding strategies.}
\label{fig:sandiego}
\end{center}
\end{figure}
This effect is even stronger in the San Diego network
(Figure~\ref{fig:sandiego}),
probably due to the structure of San Diego network. Such typical social networks
have a
 number of small dense clusters that weakly connect to each other.
This makes selecting high degree nodes from the same cluster a bad idea.

The comparative advantage of Projected Greedy is higher with
\math{\lambda_d=0} for two reasons. The first is that the
simplified instance \math{\S}
on which the seed set is near optimal is a
 better approximation to the true instance \math{\G} when
\math{\lambda_d=0}. Second, when you have some element of summing in the
diffusion process, all algorithms will improve, and hence their differences
will diminish. Nevertheless, Projected Greedy still outperforms
even for
$\lambda_{d} = 0.2$, managing to convert almost every node
to believed state with seed set size as low as $20\%$; even at 50\% seeds, the
other algorithms cannot achieve this.

\subsection{Cross section of Results}

We give a comprehensive cross section of the results in
Table~\ref{tab:results}
for all three types of networks, both homogeneous and group-variable trust,
and for a variety of trust thresholds.
\begin{table}[t]
\begin{center}
\begin{tabular}{l l l c c c c c c }
	\toprule
		& & & \multicolumn{3}{c}{Homogenous Trust} & \multicolumn{3}{c}{Variable Trust}\\
		\cmidrule (lr{.75em}){4-6} \cmidrule (lr{.75em}){7-9}
		& & & R & HD & PG & R & HD & PG \\
		\midrule
\multirow{5}{*}{SF} & \multirow{2}{*}{$t_l=0.2,\ t_h=0.3$} & $\lambda_{d}:0.0$ & $25.76$ & $1.33$ & $\mathbf{0.00}$ & $4.32$ & $0.08$ & $\mathbf{0.00}$ \\
\cmidrule {3-9}
& & $\lambda_{d}:0.2$ & $\mathbf{0.00}$ & $\mathbf{0.00}$ & $\mathbf{0.00}$ & $\mathbf{0.00}$ & $\mathbf{0.00}$ & $\mathbf{0.00}$ \\
\cmidrule {2-9}
& \multirow{2}{*}{$t_l=0.15,\ t_h=0.55$}  & $\lambda_{d}:0.0$ & $70.76$ & $3.23$ & $\mathbf{0.00}$ & $71.00$ & $3.20$ & $\mathbf{0.00}$ \\
\cmidrule {3-9}
 & & $\lambda_{d}:0.2$ & $25.75$ & $1.33$ & $\mathbf{0.00}$ & $14.39$ & $0.63$ & $\mathbf{0.00}$ \\
\midrule
\multirow{5}{*}{RG} & \multirow{2}{*}{$t_l=0.2,\ t_h=0.3$} & $\lambda_{d}:0.0$ & $31.85$ & $11.16$ & $\mathbf{0.00}$ & $16.21$ & $3.87$ & $\mathbf{0.00}$ \\
\cmidrule {3-9}
& & $\lambda_{d}:0.2$ & $2.40$ & $2.03$ & $\mathbf{0.00}$ & $2.77$ & $2.30$ & $\mathbf{0.00}$ \\
\cmidrule {2-9}
& \multirow{2}{*}{$t_l=0.15,\ t_h=0.55$}  &$\lambda_{d}:0.0$ & $46.46$ & $10.94$ & $\mathbf{0.00}$ & $50.06$ & $16.48$ & $\mathbf{0.00}$ \\
\cmidrule {3-9}
& & $\lambda_{d}:0.2$ & $32.20$ & $10.64$ & $\mathbf{0.00}$ & $21.68$ & $3.87$ & $\mathbf{0.00}$ \\
\midrule
\multirow{5}{*}{SD} & \multirow{2}{*}{$t_l=0.2,\ t_h=0.3$} & $\lambda_{d}:0.0$ & $31.85$ & $11.02$ & $\mathbf{0.00}$ & $14.09$ & $2.91$ & $\mathbf{0.00}$ \\
\cmidrule {3-9}
& & $\lambda_{d}:0.2$ & $2.68$ & $2.33$ & $\mathbf{0.00}$ & $3.32$ & $2.76$ & $\mathbf{0.00}$ \\
\cmidrule {2-9}
& \multirow{2}{*}{$t_l=0.15,\ t_h=0.55$}  &$\lambda_{d}:0.0$ & $47.28$ & $11.39$ & $\mathbf{0.00}$ & $52.14$ & $15.60$ & $\mathbf{0.00}$ \\
\cmidrule {3-9}
& & $\lambda_{d}:0.2$ & $32.19$ & $10.52$ & $\mathbf{0.00}$ & $18.16$ & $2.41$ & $\mathbf{0.00}$ \\
\bottomrule
\end{tabular}
\end{center}
\caption{\small Cross section of results. The table shows
the relative regret of an algorithm for a particular scenario, a regret of
\math{\mathbf{0}} indicating the best performing algorithm. The Project Greedy
heuristic significantly dominates all the other algorithms in almost
every scenario. SF=scale free network; RG=random group model; SD=San Diego network.\label{tab:results}}
\end{table}
We quantify the performance in terms of the regret: for a given scenario,
the best performing algorithm has regret zero, and otherwise,
$$ regret = \frac{\g_{best}-\g}{\g_{best}}\times 100\%,$$
where $\g_{best}$ is the number of nodes evacuated by
the best algorithm for that
scenario, and $\g$ is the number of nodes evacuated by the algorithm
whose regret is being computed. A good algorithm would always have low regret.
All results are for a total seeding size of $5\%$.
The actual fraction of the network evacuated is shown in
 Table~\ref{tab:sf} for a particular instance.

In general, we make the following observations. When the number of
evacuated nodes
is large, which occurs either with low thresholds or a high value for
\math{\lambda_d} (more information aggregation),
the performance edge delivered by Projected Greedy is smallest.
In fact when \math{\lambda_d} increases beyond $ 0.2$, the effect of
 information aggregation quickly takes over and there is little difference between the performance of the three algorithms as almost all nodes
are evacuated with ease.
When the thresholds are high or \math{\lambda_d\approx0} the performance
edge delivered by Projected Greedy is quite significant,
in accordance to results shown in Figures \ref{fig:randomgroup} and
 \ref{fig:sandiego}. Projected Greedy's advantage increases with more seeds.

Note that in scale free graphs, high-degree is comparable to Projected Greedy
but random significantly underperforms. This is because in such networks
there are a few extremely important nodes, and it is essential to include
these nodes in the seed set, which is unlikely to happen with random
seeding.

\begin{table}[t]
\begin{center}
\begin{tabular}{l l c c c c c c }
	\toprule
		& & \multicolumn{3}{c}{Homogenous Trust} & \multicolumn{3}{c}{Variable Trust}\\
		\cmidrule (lr{.75em}){3-5} \cmidrule (lr{.75em}){6-8}
		& & R & HD & PG & R & HD & PG \\
		\midrule
\multirow{4}{*}{$t_l=0.15,\ t_h=0.55$ $\lambda_{d}:0.0$} & SF & $20.82$ & $68.89$ & ${71.19}$ & $20.65$ & $68.92$ & ${71.20}$ \\
\cmidrule{2-8}
& RG & $22.26$ & $37.03$ & ${41.58}$ & $16.85$ & $28.17$ & ${33.73}$ \\
\cmidrule{2-8}
& SD & $22.19$ & $37.30$ & ${42.09}$ & $16.72$ & $29.48$ & ${34.93}$ \\
\bottomrule
		
\end{tabular}
\end{center}
\caption{\small
Fraction of the network evacuated for a particular scenario. Random
has roughly the same performance on all networks and is a
significant under performer on scale free networks.\label{tab:sf}}
\end{table}

\subsection{Threshold Selection}\label{sec:threshold}
As described in Section \ref{sec:greedy}, in order to select the best seed set we carry out simulations over a range of values for threshold $t$ in the simplified model.
Specifically we repeat the simulation $c$ times.
It is interesting to see how the coverage
\math{\g_\G} changes as we change the thresholds in the simplified
model \math{\S}. In particular, how the best threshold
\math{t_{opt}} in the simplified model (i.e. the threshold that gives the
closest approximation to \math{\G}) depends on the
 parameters of the general model, in particular
$(\lambda_{s},\lambda_{d})$,  the lower and upper node thresholds
$(t_{l},t_{u})$ and transmission probability $p$.
\begin{figure}[t]
\begin{center}
\includegraphics[width = 6.5in]{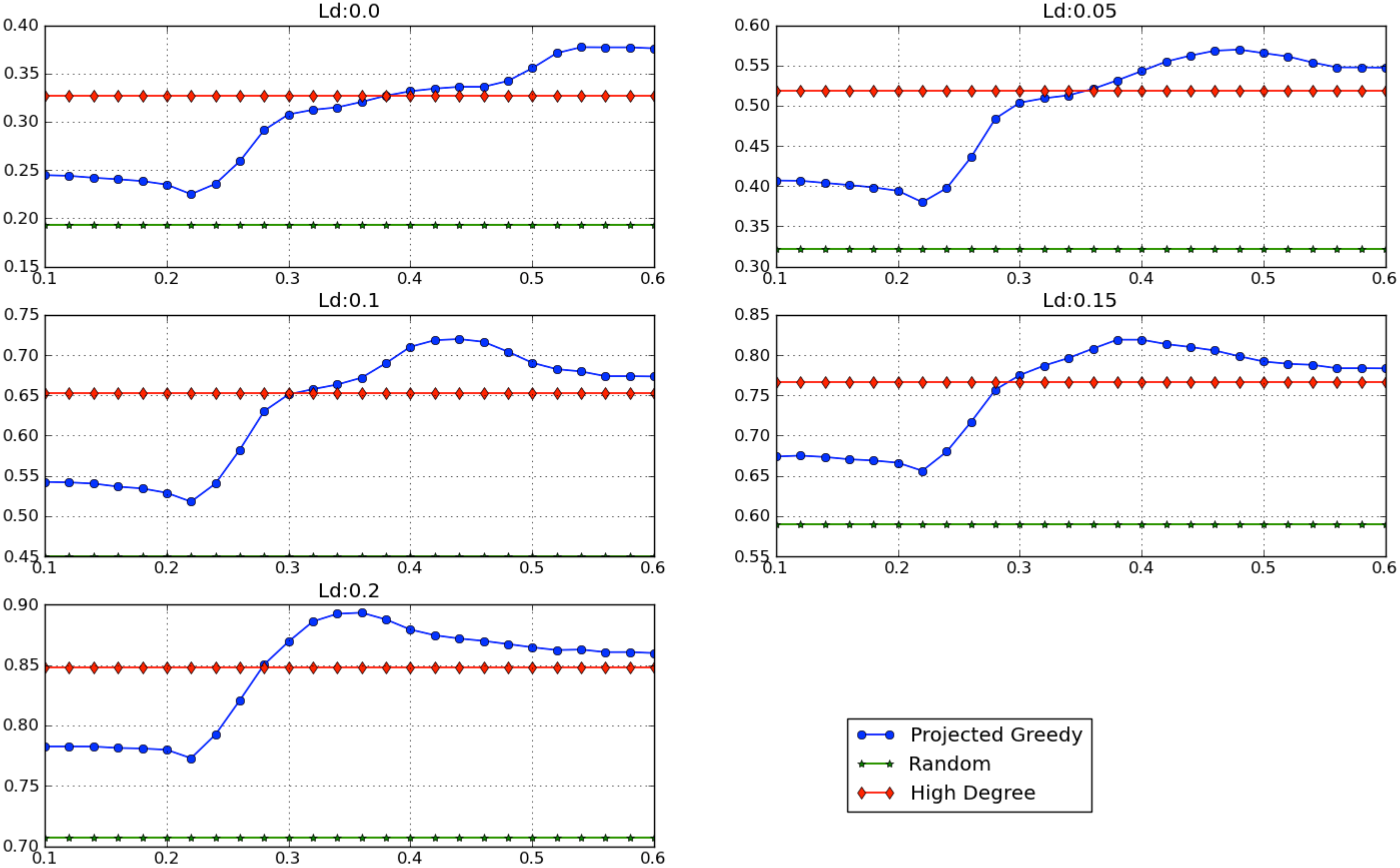}
\caption{\small
Plots showing change in the coverage in the general instance
\math{\G} obtained from using different node thresholds in the simplified
instance \math{\S}. We are interested in how the best threshold
in the simplified instance \math{\S} changes (the threshold that gives
the maximum coverage in \math{\G}) as we increase $\lambda_{d}$.
The $x$-axis shows the threshold in the simplified instance \math{\S} and the
$y$-axis shows the proportion of nodes evacuated in the general input
instance \math{\G}.
The plots are labeled with the value of $\lambda_{d}$.}
\label{fig:thrSelect}
\end{center}
\end{figure}

Our simulation results show that there is an interesting relationship between $t_{opt}$ (the best threshold to choose in \math{\S}
and $t_{u}$ (the upper thresholds in \math{\G})
under different values of $\lambda_{d}$.
In order to observe this relationship, we perform simulations on a generalized version of the San Diego network.
In this generalized version, edge trust values and node thresholds are selected uniformly at random from a range instead of being fixed to specific values.
Thus the generalized network tries to incorporate variations observed in real networks.
The network used for simulation has the following parameter values.
For edges between nodes belonging to the same group, trust values are selected uniformly at random from the range $[0.7,0.8]$.
For edges between nodes belonging to different groups, trust values are selected uniformly at random from the range $[\a_{low}-0.5, \a_{low}+0.5]$.
Here $a_{low}$ is selected such that the expected value of edge trusts in the network is $0.7$.
Similarly, for every $u\in V$, $t_{l}(u)$ is selected uniformly at random from range $[0.1,0.2]$ and $t_{u}(u)$ from range $[0.5,0.6]$.
For the probability of transmission $p = 0.75$ and seed set size $5\%$ of total nodes, Figure \ref{fig:thrSelect} shows the proportion of nodes evacuated with threshold values for Projected greedy in range $[0.1,0.6]$ and $\lambda_{d} = [0.0,0.05,0.1,0.15,0.2]$.

It is intuitive to expect that $t_{opt}$ will be close to $E[t_{u}]$ since nodes get converted to Believed state only after information value crosses $t_{u}$.
This is exactly what we see in the case where $\lambda_{d} = 0.0$.
The optimal threshold $t_{opt}$ is very close to $E[t_{u}] = 0.55$.
But as $\lambda_{d}$ increases we see a gradual decrease in the value of $t_{opt}$.
As $\lambda_{d}$ increases, information gets aggregated and the fused information value becomes progressively more successful at breaching $t_{u}$.
In other words, the general model starts behaving like a simple model with a smaller $t_{u}$ value where, for the same amount of initial information, it is comparatively easier to convert nodes.
Table \ref{tab:thr} shows how the value of $t_{opt}$ reduces with increasing $\lambda_{d}$.
The rate at which $t_{opt}$ decreases may depend upon factors such as the type of network (i.e. network structure), $t_{l}$ etc. These are interesting
questions for future work.

\begin{table}[t!]
\caption{$t_{opt}$ decreases as $\lambda_{d}$ increases.}
\begin{center}
\label{tab:thr}
\begin{tabular}{c c}
\toprule
$\lambda_{d}$ & $t_{opt}$ \\
\midrule
$0.0$ & $0.54$ \\
\midrule
$0.05$ & $0.48$ \\
\midrule
$0.1$ & $0.44$ \\
\midrule
$0.15$ & $0.38$\\
\midrule
$0.2$ & $0.36$ \\
\bottomrule
		
\end{tabular}
\end{center}
\end{table}

\bibliographystyle{plain}
\bibliography{evacref}

\remove{

}

\end{document}